\documentclass[sigconf, nonacm]{acmart}

\usepackage{pvldb}
\renewcommand{\abstractname}{ABSTRACT}
\renewcommand{\acksname}{ACKNOWLEDGMENTS}

\usepackage{amsmath}
\usepackage{graphicx}
\usepackage{textcomp}
\usepackage[normalem]{ulem}
\usepackage{listings}
\usepackage{courier}
\usepackage{booktabs}
\usepackage{multirow}
\usepackage{makecell}
\usepackage{svg}
\usepackage{url}
\usepackage{ifthen}
\usepackage{forest}
\usepackage{pifont}
\usepackage{adjustbox}
\usepackage[shortlabels]{enumitem}
\usepackage{soul}
\usepackage{multicol}
\usepackage{xparse}
\usepackage[utf8]{inputenc}

\usepackage{arydshln}
\usepackage{amsthm}

\definecolor{mymauve}{rgb}{0.58,0,0.82}
\definecolor{dkgreen}{rgb}{0,0.6,0}
\definecolor{browncolor}{rgb}{0.6, 0.3, 0.0}

\definecolor{lowdpcolor}{rgb}{0.9, 0.8, 0.2}
\definecolor{lightpurple}{rgb}{0.8, 0.7, 0.9}
\definecolor{lightpink}{rgb}{1.0, 0.8, 0.86}

\definecolor{advantagecolor}{rgb}{0.8, 1.0, 0.8}  % 更淡的绿色
\definecolor{disadvantagecolor}{HTML}{FFE5E5} %更淡的粉红

\definecolor{lowdpcolor}{HTML}{DDDDFF}
\definecolor{highdpcolor}{HTML}{FFCCCC}

\definecolor{lightred}{HTML}{FFCCCC}     % 淡红色
\definecolor{lightblue}{HTML}{CCCCFF}    % 淡蓝色

\definecolor{myblue}{rgb}{0,0,1}
\definecolor{verylightblue}{rgb}{0.8, 0.9, 1.0}
\sethlcolor{verylightblue}

\forestset{
  default preamble={
    for tree={
      parent anchor=south, 
      child anchor=north,
      align=center,
      edge={-latex},
      rounded corners,
      draw,
      fill=white,
      s sep=7mm,
      l sep=10mm,
      anchor=center,
      calign=center,
      align=center
    }
  }
}

\def\BibTeX{{\rm B\kern-.05em{\sc i\kern-.025em b}\kern-.08em
    T\kern-.1667em\lower.7ex\hbox{E}\kern-.125emX}}

\newtheorem*{scenario*}{\bf Targeted Scenarios}
\newtheorem{definition}{\bf Definition}
\newtheorem{lemma}{\bf Lemma}
\newtheorem{corollary}{\bf Corollary}
\newtheorem{theorem}{\bf Theorem}

\newtheorem{problem}{\bf Problem}
\useunder{\uline}{\ul}{}

\renewcommand\vldbdoi{10.14778/3849423.3849426}
\renewcommand\vldbpages{95 - 107}
\renewcommand\vldbissue{2}
\renewcommand\vldbyear{2026}
\renewcommand\vldbpagestyle{empty}
\renewcommand\vldbavailabilityurl{https://github.com/DDDDDstar/CORAL}
\renewcommand\vldbauthors{Shixin Wan, Guoyu Hu, Yifan Wu, Ke Chen, Lidan Shou}

\usepackage{subfigure}
\usepackage{algorithm}
\usepackage{algpseudocode}

\newtheorem{motivation}{\bf Motivation}

\makeatletter
\newcount\coralauthorbox
\pretocmd{\@typeset@author@bx}{%
  \global\advance\coralauthorbox by 1
  \ifnum\coralauthorbox>3
    \author@bx@wd=.48\textwidth
    \advance\author@bx@wd by -\author@bx@sep
  \fi
}{}{}
\makeatother

\begin{document}

\title{CORAL: Cross-modal Vector Retrieval via Incremental Graph Construction at Scale}

\author{Shixin Wan}
\affiliation{
  \institution{The State Key Laboratory of Blockchain and Data Security, Zhejiang University}
  \institution{Hangzhou High-Tech Zone (Binjiang) Institute of Blockchain\&Data Security}
  % \city{Hangzhou}
  % \country{China}
  \country{}
}
\email{wansx@zju.edu.cn}
% \authornotemark[2]

\author{Guoyu Hu}
\affiliation{
  \institution{National University of Singapore}
  % \country{China}
  \country{}
}
\email{guoyu.hu@u.nus.edu}

\author{Yifan Wu}
\affiliation{
  \institution{The State Key Laboratory of Blockchain and Data Security, Zhejiang University}
  \institution{Hangzhou High-Tech Zone (Binjiang) Institute of Blockchain\&Data Security}
  % \city{Hangzhou}
  % \country{China}
  \country{}
}
\email{yifan.wu@zju.edu.cn}
% \authornotemark[2]

\author{Ke Chen}
\affiliation{%
  \institution{The State Key Laboratory of Blockchain and Data Security, Zhejiang University}
  \institution{Hangzhou High-Tech Zone (Binjiang) Institute of Blockchain\&Data Security}
  % \city{Hangzhou}
  % \country{China}
  \country{}
}
\email{chenk@zju.edu.cn}
\authornote{Ke Chen is the corresponding author.}
% \authornotemark[2]

\author{Lidan Shou}
\affiliation{%
  \institution{The State Key Laboratory of Blockchain and Data Security, Zhejiang University}
  \institution{Hangzhou High-Tech Zone (Binjiang) Institute of Blockchain\&Data Security}
  % \city{Hangzhou}
  \country{}
}
\email{should@zju.edu.cn}

\begin{abstract}
Cross-modal vector retrieval is widely used in multimodal systems, such as search engines and vector databases. It typically operates in out-of-distribution (OOD) settings, where query vectors follow a distribution that differs from that of the vectors stored in the database. In such cases, conventional indexes suffer significant performance degradation, and even methods specially designed for OOD remain limited by inefficient use of query modal characteristics, restricted GPU parallelism, and inadequate support for dynamic updates. We present CORAL, a novel GPU-accelerated graph-based vector index for scalable cross-modal retrieval, featuring hierarchical memory management that spans GPU, CPU, and disk. Specifically, CORAL incrementally incorporates the characteristics of query modality and terminates index construction timely. Crucially, it introduces coverage-aware adaptive pruning to address the imbalanced coverage of the query vector's neighbors. Moreover, CORAL presents a fully neighborhood-aware projection approach to efficiently utilize GPUs for highly parallel index construction, and a targeted connectivity enhancement method to refine the index structure. Besides, CORAL also supports modal-semantics-based vector insertion and topology-repairing deletion that restore node connectivity. Experimental results demonstrate that CORAL outperforms existing methods with up to $1.6\times$ the throughput at matched recall while reducing construction time by up to 56\%. Furthermore, it exhibits remarkable resilience under dynamic updates and remains effective at the billion scale.
\end{abstract}

\maketitle

%%% do not modify the following VLDB block %%
%%% VLDB block start %%%
\vldbtopmatter
%%% VLDB block end %%%

\begin{figure}[t]
    \centering
    \includegraphics[width=\linewidth]{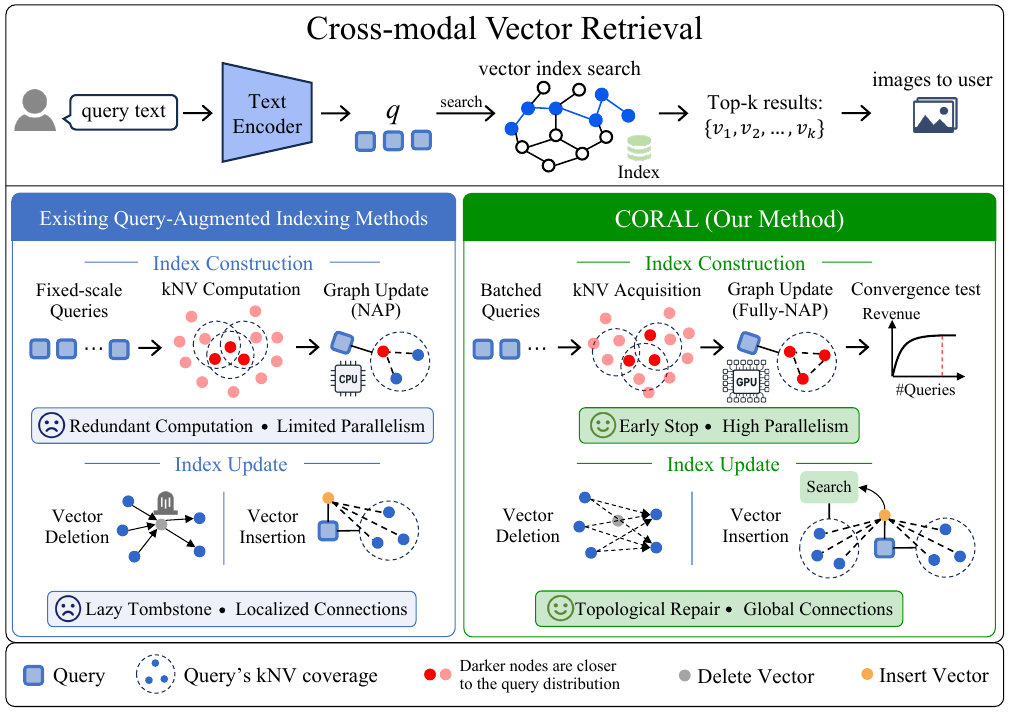}
    \caption{Comparison of CORAL and existing methods.}
    \label{fig:def}
\end{figure}

\section{INTRODUCTION}
Cross-modal vector retrieval plays a crucial role in multimodal data systems, such as search engines and vector databases~\cite{covington2016deep, karpukhin2020dense}. In these systems, unstructured data from different modalities are embedded into a shared high-dimensional vector space through advanced multimodal representation learning approaches~\cite{radford2021clip}. These high-dimensional vectors in the shared space allow queries in one modality to retrieve semantically similar instances in another, via approximate nearest neighbor search (ANNS).

The typical approach to support ANNS is to use high-dimensional vector indexes~\cite{malkov2020efficient, johnson2021billion, cao2017correlation, chen2023finger, peng2026svfusion}.
Classical indexes assume that queries and indexed data follow similar distributions and locate nearest neighbors through local geometric proximity~\cite{aumuller2023recent, baranchuk2023dedrift, dang2018image, hu2025hakes}.
Cross-modal retrieval violates this assumption because different modalities often follow distinct distributions, creating a \textit{modality gap} that reduces search effectiveness~\cite{radford2021clip, shrivastava2023theoretical}.
Recent graph-based indexes address out-of-distribution (OOD) search by adding query-guided links that better align the graph with the query distribution~\cite{chen2024roargraph, yang2025paragraph, yue2026efficient}. RoarGraph projects query neighborhoods into graph edges, ParaGraph offloads filtering to GPUs, and CDMG enforces cross-distribution monotonicity.

%第三段：介绍完最先进的工作后（第二段），总结归纳他们的挑战是什么？或者limitation是什么？
Nevertheless, the existing index methods optimized for OOD ANNS~\cite{chen2024roargraph, yang2025paragraph, yue2026efficient, jaiswal2022ood} still suffer from critical limitations:

\noindent \textbf{L1 (Inefficient handling of skewed query coverage).} 
As shown in Figure~\ref{fig:def}, \emph{query coverage} refers to the extent to which the base dataset is covered by the retrieved $k$-nearest vectors (kNV) of a set of auxiliary query vectors. These retrieved neighbors are used by existing methods to guide graph construction towards the query distribution.
Due to the modality gap, the \emph{query coverage} is highly skewed: many auxiliary queries repeatedly retrieve a small set of nodes, whereas many other base nodes are rarely retrieved and may remain isolated.
As a result, the graph is over-optimized for frequently retrieved regions but remains weakly connected elsewhere, leading to lower recall, longer traversal paths, and reduced query throughput.
Existing methods~\cite{chen2024roargraph, yang2025paragraph} mitigate such skewed coverage by introducing more auxiliary queries into the indexes being constructed. However, this strategy incurs substantial construction overhead while yielding limited structural improvement, since the additional updates remain concentrated on the already well-connected nodes.

\noindent \textbf{L2 (Limited parallelism in index update and search).}
Existing indexes are mostly designed for CPUs and remain poorly adapted to GPU execution for both graph updates and searches.
Graph updates are performed at fine granularity, where each query's kNV affects only a small number of graph nodes, resulting in limited parallelism and hindering effective batching. 
% In addition, out-degree constraints, pruning of candidate neighbors, and graph refinement introduce data dependencies and sequential traversal steps, further limiting parallel execution.
As a result, even with advanced GPU-based methods, ParaGraph's update performance remains limited~\cite{yang2025paragraph}.
In graph search, these graph-based indexes sequentially expand each node, which offers little parallelism and makes it difficult to exploit GPU acceleration~\cite{fu2022satellite, fu2019fast}.

\noindent \textbf{L3 (Poor maintenance of node connectivity for updates).}
Existing indexes rely on auxiliary queries to construct the index. As shown in the left half of Figure~\ref{fig:def}, an inserted vector (orange dot) is assigned to the coverage region of certain auxiliary queries (dashed circle) and linked with other base vectors (blue dots) contained in the same region, thereby preserving node connectivity. However, such localized updates are insufficient to maintain connectivity between inserted vectors and the global graph, especially as the query/data distributions diverge. 
For deletions, existing methods typically use a lazy tombstone mechanism~\cite{peng2023efficient}, in which obsolete nodes are marked invalid rather than explicitly removed. 
These stale nodes continue to occupy limited out-degree slots, preventing effective rewiring and causing the graph structure to gradually drift away from the updated data distribution.

% 第四部分：我们提出了什么？为了解决每一个L，我们提出了什么新技术？
To overcome these limitations, we propose CORAL, a novel graph-based index for cross-modal vector retrieval that supports efficient construction, search, and robust updates.
% incremental index construction and efficient GPU acceleration.
To address \textbf{L1}, CORAL employs an incremental construction framework that progressively enriches the index using batched query workloads. %incorporates query information.
As shown in Figure~\ref{fig:def}, queries are processed in a pipelined manner with a test mechanism that triggers early termination once information gain converges. 
In addition, we propose \textit{Coverage-Aware Adaptive Pruning} (CAAP) to optimize query coverage across the base dataset.  By promoting coverage in unexplored regions and pruning redundant updates on frequently visited nodes, our design maximizes per-query utility and eliminates extraneous computation without compromising retrieval accuracy.

To tackle \textbf{L2}, we propose a novel graph construction primitive called \textit{Fully Neighborhood-Aware Projection} (FNAP). 
Unlike traditional methods that rely on fine-grained, sequential updates, FNAP jointly leverages the structural topology within each kNV to reformulate graph evolution into coarse-grained, bulk-synchronous operations, maximizing parallelism for GPU acceleration. To handle datasets exceeding device memory, CORAL integrates a heterogeneous CPU-GPU co-processing scheme and a hierarchical index storage manager that orchestrates data movement across the GPU device, main memory, and disk storage. Complementing CAAP and FNAP, CORAL further introduces \textit{Targeted Connectivity Enhancement} (TCE), which repairs the residual isolated and over-concentrated hot nodes after coverage-aware pruning.

To address \textbf{L3}, we propose a topology-aware index update mechanism to preserve graph quality during insertions and deletions.
Insertions leverage auxiliary query vectors to integrate new data and enhance their connectivity with the entire graph through global search.
Deletions perform targeted local topology repair rather than lazy tombstone removal.
Together, these designs mitigate performance degradation over time and eliminate the need for costly periodic index reconstruction.

% 第五部分：我们的方法总体的实验效果预览，概括核心优势
We compare CORAL with state-of-the-art ANNS methods across datasets ranging from 1M to 1B in size to demonstrate the efficiency, scalability, and robustness of CORAL. 
CORAL outperforms existing methods with up to $1.6\times$ the throughput at matched recall while reducing construction time by up to 56\%. 
For index updates, CORAL sustains Recall@100 around 98\% after replacing 50\% of the original dataset, while the best competitor's recall drops below 90\% after only 10\% of the data is replaced.

%第六部分：归纳所有的贡献，要和你前面的新方案是一致的。
The main contributions are summarized as follows:
% \begin{itemize}
\begin{itemize}[leftmargin=1.5em]
\item We propose CORAL, a novel graph-based vector index for cross-modal vector retrieval. The key feature of CORAL is an incremental graph index construction approach that integrates construction and refinement to reduce redundant kNV computations and fully exploit neighborhood structural information (Section~\ref{sec:methodology}).
 
\item We propose novel topology-aware index update techniques, including vector insertion guided by modal semantics and deletion with topology repair, to preserve the quality of the graph index during runtime (Section~\ref{sec:update}). 

\item We propose high-performance graph construction and search primitives to enable CPU-GPU co-processing, and a hierarchical index storage manager for large-scale data processing across the GPU, main memory, and disk (Section~\ref{sec:gpu_implementation}).

\item We conduct extensive experiments on datasets ranging from millions to billions of vectors, showing that CORAL consistently achieves better performance than state-of-the-art OOD and in-distribution (ID) baselines in terms of recall, efficiency, and scalability. We also conduct ablation studies to validate the contribution of each component (Section~\ref{sec:experiments}).
\end{itemize}

In addition, Section~\ref{sec:prelim} provides the background on ANNS and analyzes the limitations of existing methods;
Section~\ref{sec:related_work} reviews the related work; 
and Section~\ref{sec:conclusion} concludes our work.

\section{BACKGROUND}
\label{sec:prelim}

\subsection{Out-of-Distribution ANNS}

Approximate nearest neighbor search (ANNS) is a fundamental primitive for vector retrieval.
Given a query vector, ANNS aims to retrieve its nearest vectors from a vector dataset (i.e., the base dataset) without exhaustively computing distances to all vectors (i.e., the base vectors).
In graph-based ANNS, this is typically achieved by constructing a proximity graph over the base dataset and performing greedy traversal for the query.
To formalize the retrieval target, we first define the exact nearest vectors of a query.

\begin{definition}
\textbf{$k$-nearest vectors (kNV).}
Given a query vector $q\in\mathbb{R}^d$ and a dataset $D=\{v_1,v_2,\ldots,v_N\}\subset \mathbb{R}^d$, the $k$-nearest vectors of $q$ in $D$ are the $k$ base vectors with the closest distances to $q$ under a distance function $\delta(\cdot,\cdot)$, such as Euclidean distance or inner product~\cite{cover1967nearest}. Formally,
$\mathrm{kNV}(q,D)
= \operatorname*{k\text{-}arg\,min}_{i\in[1, N]} \; \delta(q,v_i)$,
where $|\mathrm{kNV}(q,D)|=k$ and $\delta(q,v_i)\le \delta(q,v_j)$ for all $v_i\in \mathrm{kNV}(q,D)$ and $v_j\in D\setminus \mathrm{kNV}(q,D)$.
\end{definition}

The effectiveness of ANNS methods depends heavily on the relationship between the query distribution and the base distribution.
Conventional ANNS indexes are mainly designed for the in-distribution setting, where queries and base vectors follow the same distribution.
However, in cross-modal vector retrieval, queries and base vectors originate from different modalities and thus follow different distributions.
This gives rise to out-of-distribution ANNS.

\begin{definition}
\textbf{Out-of-distribution ANNS (OOD ANNS).}
Let the base dataset $D_b$ be drawn i.i.d. from a base distribution $\mathcal{P}_b$, i.e., $v_i\sim \mathcal{P}_b$.
Given a query $q$, ANNS is said to be \emph{in-distribution (ID)} if $q\sim \mathcal{P}_b$, and \emph{out-of-distribution (OOD)} if $q\sim \mathcal{P}_q$ with $\mathcal{P}_q\neq \mathcal{P}_b$.
OOD ANNS aims to retrieve an approximate result set $S$ such that $S\approx \mathrm{kNV}(q, D_b)$ without exhaustively evaluating all pairwise distances.
\end{definition}

To address OOD ANNS, existing approaches incorporate auxiliary query vectors during index construction~\cite{chen2024roargraph, yang2025paragraph, yue2026efficient}.
Graph-based indexes use these query vectors to connect their nearest base vectors, even when these base vectors are not close in the vector space.
These edges provide shortcuts for graph traversal during index search, thereby improving retrieval efficiency and accuracy.

The typical construction workflow consists of three stages.

\noindent\textbf{Stage 1: Exact kNV computation for query vectors.}
To utilize query modal information for index construction, an auxiliary query-modal vector set $Q_b$ is ingested.
For each query-modal vector $\hat{q}\in Q_b$, existing methods compute its $\mathrm{kNV}(\hat{q},D_b)$.
These base vectors in $\mathrm{kNV}(\hat{q}, D_b)$ are regarded as semantically related from the perspective of the query distribution.
The neighborhood size $k_b$ affects query coverage, i.e., the extent to which the collective kNV sets of the auxiliary queries cover the base dataset. 

\noindent\textbf{Stage 2: Neighborhood-Aware Projection (NAP).}
For each query $\hat{q}\in Q_b$, adjacency relationships are established among vectors in $\mathrm{kNV}(\hat{q}, D_b)$.
NAP selectively connects these base vectors to construct a sparse graph while controlling the maximum out-degree of each node.
Since a base vector can appear in the kNV sets of multiple queries, the projection process aims to introduce query-aware edges while preserving a bounded graph structure.

\noindent\textbf{Stage 3: Connectivity Enhancement (CE).}
After projection, the resulting graph usually still contains weakly connected or isolated components that hinder graph traversal.
CE improves graph connectivity by using sampled base vectors as queries to search on the graph and identify additional edges.
The ratio of nodes selected for CE is denoted by $\rho_{CE}$.
In traditional methods, $\rho_{CE}$ is typically set to 1, meaning that all nodes are exhaustively involved in CE.

After construction, OOD ANNS is performed by greedy graph traversal, typically implemented as beam search.
The search starts from entry nodes and iteratively expands the closest unvisited node under a bounded beam budget, returning the top-$k$ closest visited nodes upon termination.
Therefore, the effectiveness of OOD search directly depends on the structural quality of the graph.
Query-aware edges act as shortcuts to regions relevant to the query modality, allowing beam search to reach high-quality candidates with fewer distance evaluations.
In contrast, weak connectivity or isolated nodes can trap the traversal in local regions and cause recall loss.

\subsection{Problem Statement}

This paper studies the \emph{cross-modal vector retrieval} problem, which is typically addressed through OOD ANNS approaches. 
In cross-modal vector retrieval, queries and indexed data originate from different modalities (i.e., the query modality and base modality). 
A representative example is text-to-image retrieval, where textual queries are used to retrieve relevant images.
Data from the two modalities are embedded into a shared vector space to obtain query and base vectors, respectively~\cite{radford2021clip}.
We formalize the cross-modal vector retrieval problem as follows.

\begin{problem}
\textbf{Cross-modal vector retrieval.}
Let $\mathcal{X}_b$ and $\mathcal{X}_q$ denote the base and query modalities with $\mathcal{X}_b\neq \mathcal{X}_q$.
An embedding model maps data from both modalities into a shared vector space $\mathbb{R}^d$. 
Let $D_b=\{v_1,v_2,\ldots,v_N\}\subset \mathbb{R}^d$
be the base vector dataset derived from $\mathcal{X}_b$, and let $q\in\mathbb{R}^d$ be a query vector derived from $\mathcal{X}_q$.
Cross-modal vector retrieval aims to retrieve the exact top-$k$ nearest neighbors of $q$ in $D_b$ under the metric $\delta$, defined as
\[
S^{*} = \operatorname*{k\text{-}arg\,min}_{i\in[1, N]} \; \delta(q,v_i),
\]
where ties are broken arbitrarily.
Since computing $S^{*}$ exhaustively is prohibitive at scale, the goal is to return an approximate result set $S\subseteq D_b$ with $|S|=k$ that closely approximates $S^{*}$. The approximation quality is measured by $\mathrm{Recall}@k=|S\cap S^{*}|/k$, where a higher value indicates a closer approximation to the exact result. We report Recall@$k$ at $k\in\{10,100\}$ throughout the paper.
\end{problem}

\subsection{Motivation}
\label{sec:motivation}

\begin{figure}[t]
    \centering
    \subfigure[Recall of test samples vs. ingested auxiliary query size] {\label{fig:motivation-a}
        \includegraphics[width=0.3\linewidth]{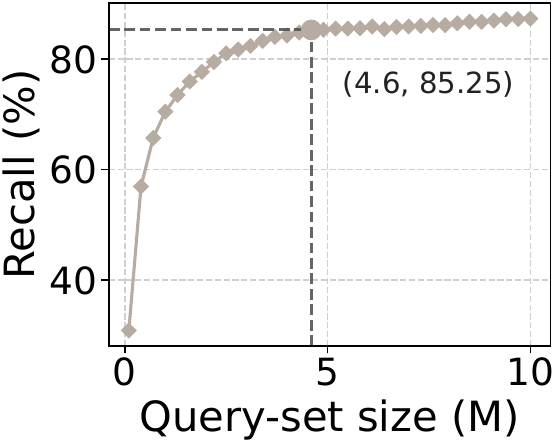}
    }
    \subfigure[Cumulative access share vs. the fraction of top-ranked base nodes] {\label{fig:motivation-b}
        \includegraphics[width=0.32\linewidth]{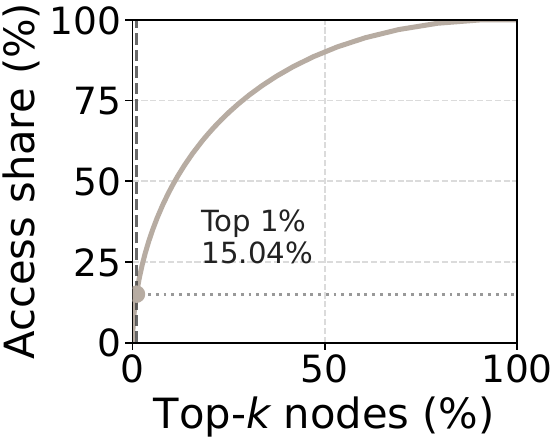}
    }
    \subfigure[Isolated-node ratio vs. ingested query size] {\label{fig:motivation-c}
        \includegraphics[width=0.3\linewidth]{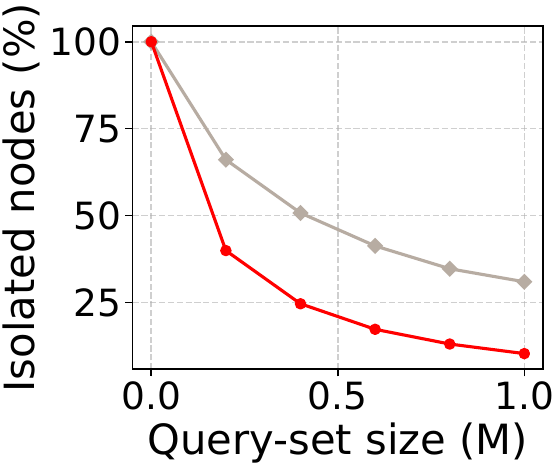}
    }
    \caption{Measurement study on T2I-10M.}
    \label{fig:motivation_study}
\end{figure}

For the cross-modal vector retrieval problem, we study existing OOD ANNS methods and conduct experimental analyses to better understand their bottlenecks.
Our observations motivate the design of a scalable and effective OOD-robust graph index.

% M1
Existing OOD ANNS methods rely on auxiliary query vectors to guide graph construction, making the construction cost highly sensitive to the size of the auxiliary query set.
However, it remains unclear how many auxiliary queries are actually necessary.
For example, RoarGraph commonly uses an auxiliary query set comparable in size to the base dataset, while experiments show that reducing the auxiliary query set by 50\% still yields comparable index performance.
As shown in Figure~\ref{fig:motivation-a}, the recall of test samples converges after ingesting about 4.6M queries, reaching 85.25\%, while construction cost keeps growing with query size.
This suggests that the benefit of ingesting additional query vectors diminishes once the graph structure has gradually converged. Beyond this point, new queries mainly trigger redundant updates on already well-connected nodes.
Therefore, the key challenge is not only to incorporate query modal information, but also to determine when sufficient query vectors have been ingested.
\begin{motivation}
During index construction, incrementally ingesting auxiliary query vectors to acquire structural information and terminating early once the graph converges achieves a better trade-off between index performance and construction cost.
\end{motivation}
Guided by this motivation, CORAL employs an incremental construction method that processes auxiliary queries in batches.
As shown in Figure~\ref{fig:def}, each batch contributes query-induced structural information, while a test mechanism periodically evaluates the index.
Once the evaluation metrics converge, CORAL terminates construction early, avoiding unnecessary kNV computation and graph updates while preserving retrieval performance.

% M2
While early termination reduces unnecessary query ingestion, it does not ensure that the ingested queries are effectively utilized.
As mentioned in \textbf{L1}, query coverage is highly skewed: a small subset of base vectors repeatedly appears in many queries' kNV sets, while many others are rarely covered.
In the index construction of the T2I-10M dataset, RoarGraph leaves over 30\% of nodes isolated after ingesting 1M queries, i.e., 100M retrieved-neighbor occurrences with $k=100$.
Moreover, as shown in Figure~\ref{fig:motivation-b}, access to the top  1\%  of base nodes accounts for  15.04\%  of the total access, indicating a strong coverage imbalance.
Such skew causes later queries to repeatedly update already well-connected nodes, while uncovered or weakly connected regions receive limited structural refinement.
This skew may originate from the true query distribution, embedding hubness, data imbalance, or redundancy in the construction algorithm, yet its impact on construction is the same regardless of the source, since repeatedly retrieving already well-connected nodes yields little new structural information.
This problem is further amplified by the construction methods of existing OOD ANNS graph-based indexes, in which each query's kNV is exploited in a restricted, asymmetric manner under out-degree constraints.
Therefore, the key challenge is to make kNV acquisition coverage-aware and increase the effective structural information provided by each query.
\begin{motivation}
During index construction, reducing redundant retrievals of over-covered nodes in kNV acquisition and redirecting later kNV acquisition toward under-covered regions improve graph connectivity and enhance the effective utilization of query vectors.
\end{motivation}
Guided by this motivation, CORAL introduces \emph{Coverage-Aware Adaptive Pruning}.
It records each base node's occurrence frequency in kNV and monitors the fraction of newly covered isolated nodes in each kNV during construction.
When this fraction is too low (i.e., below  10\% ), CORAL rebuilds the kNV acquisition index by removing high-frequency nodes from subsequent acquisition.
This redirects later kNV acquisition toward less-visited regions, allowing newly acquired kNV to provide more useful structural information.

% M3
Beyond initial construction, cross-modal vector retrieval systems are also expected to handle update workloads where vectors are continuously inserted and deleted.
As mentioned in \textbf{L3}, the localized insertion of existing OOD ANNS methods, which is restricted by the query coverage, is insufficient to maintain effective connectivity between inserted vectors and the global graph.
Meanwhile, a lazy tombstone mechanism~\cite{peng2023efficient} used for vector deletion causes local update stagnation, while directly removing deleted vectors destroys relevant local connectivity.
As updates accumulate, the lack of maintenance of local and global connectivity caused by these operations degrades retrieval performance.
\begin{motivation}
During index updates, maintaining both local and global connectivity for newly inserted and deleted vectors is crucial to preserving the robustness of index performance as data evolves.
\end{motivation}
Guided by this motivation, CORAL introduces a topology-aware update mechanism.
For insertion, CORAL augments localized query-aware connections with a refinement based on graph search to strengthen the inserted node's global connectivity.
For deletion, CORAL physically removes deleted nodes and repairs the affected local topology by reconnecting their neighbors.
Overall, these designs preserve local structural consistency and global connectivity, mitigating performance degradation as data evolves.

\begin{figure}[t]
    \centering
    \includegraphics[width=\linewidth]{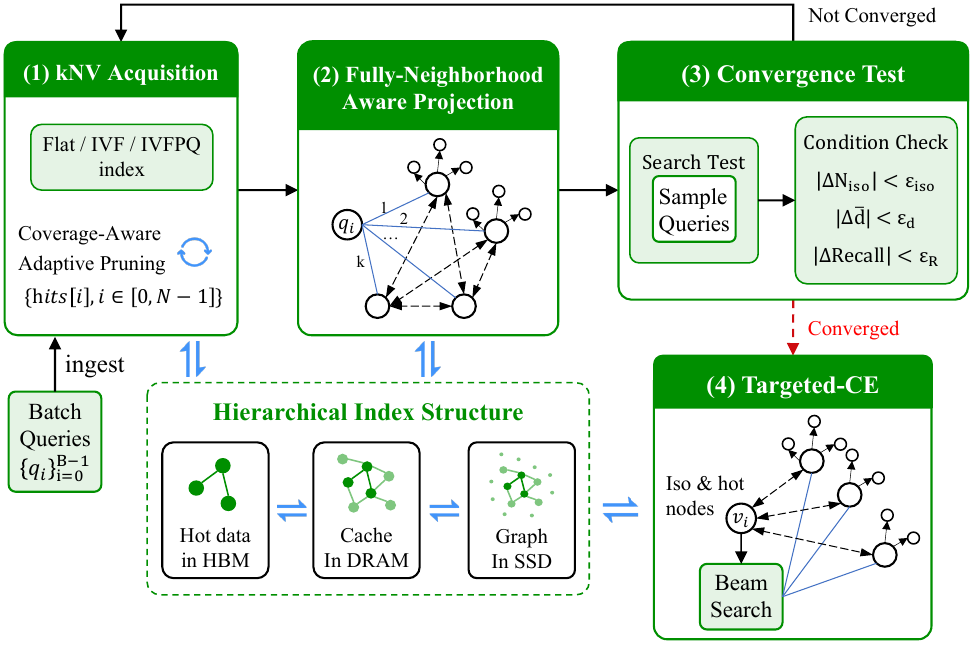}
    \caption{The workflow of CORAL's index construction.}
    \label{fig:overview}
\end{figure}

\section{CORAL}
\label{sec:methodology}

\subsection{Overview}

We present CORAL, a GPU-accelerated graph-based vector index for scalable cross-modal retrieval.
CORAL addresses three limitations of prior OOD graph methods, namely skewed query coverage, limited kNV utilization, and residual structural defects, with three techniques, Coverage-Aware Adaptive Pruning (CAAP), Fully Neighborhood-Aware Projection (FNAP), and Targeted Connectivity Enhancement (TCE), organized into two phases that share FNAP as a common graph-update primitive.

As shown in Figure~\ref{fig:def}, the first phase is an \textbf{incremental index construction} workflow with a hierarchical index structure across GPU, main memory, and disk. It ingests auxiliary queries in batches and updates the graph through kNV acquisition with CAAP, FNAP, a convergence test, and TCE (Section~\ref{sec:construction}).

The second phase is \textbf{topology-aware index update}, which runs in parallel with ongoing vector searches and includes modal-semantics-based vector insertion and topology-repairing deletion, both altering local graph structure (Section~\ref{sec:update}). Across construction, search, and update, CORAL relies on several \textbf{GPU primitives}, including FNAP and index search, that offload expensive operations such as distance computation, top-$k$ selection, pairwise distance evaluation, and batched graph updates (Section~\ref{sec:gpu_implementation}).

Finally, we provide a \textbf{theoretical analysis} of the proposed construction strategies to show that they improve graph connectivity and query-aligned navigability, thereby enhancing both index quality and search efficiency in the OOD scenario (Section~\ref{subsec:theo}).
% Finally, we detail the GPU-accelerated system optimizations that enable efficient and scalable index construction and search.

\begin{figure}[t]
    \centering
    \includegraphics[width=0.95\linewidth]{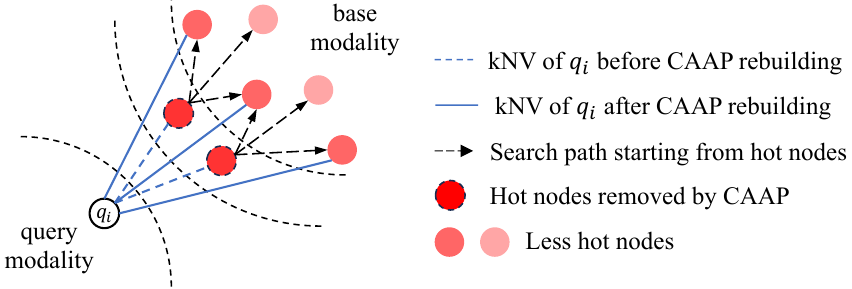}
    \caption{Illustration of CAAP and TCE.}
    \label{fig:caap}
\end{figure}

\subsection{Incremental Index Construction}
\label{sec:construction}

\noindent \textbf{Workflow.}
Figure~\ref{fig:overview} illustrates the \textbf{CORAL} index construction workflow, which consists of four modules and incrementally ingests auxiliary queries in batches:
(1) kNV acquisition for query vectors,
(2) fully neighborhood-aware projection (FNAP),
(3) convergence test, and 
(4) targeted connectivity enhancement (TCE).
% In each iteration, CORAL ingests a batch of query vectors, updates the graph, and evaluates the current index as a convergence test.

The kNV acquisition module computes the kNV of the ingested query vectors in the base dataset using indexes adapted to different data scales.
To reduce redundant graph updates due to skewed kNV coverage, CORAL applies Coverage-Aware Adaptive Pruning, which periodically rebuilds the index for kNV acquisition by removing nodes that have been excessively retrieved. 
The resulting kNV sets are then consumed by FNAP, which updates the graph by establishing adjacency between nodes within each kNV.

The convergence test module measures retrieval performance on sampled queries and monitors structural indicators such as the isolated node ratio and average out-degree. Based on these signals, CORAL decides whether further query-guided updates are still useful or whether construction should proceed to the next TCE step. 
This design avoids predefining the number of auxiliary queries and achieves a better trade-off between construction cost and index quality. 
The TCE module then mitigates the remaining structural defects by reconnecting isolated nodes and strengthening the connectivity of key routing nodes, informed by the index search.

% This approach facilitates three key mechanisms:
% coverage-aware balancing of kNV node sets, local structural enrichment via FNAP, and connectivity refinement through search-driven enhancement. These features are realized within a unified incremental graph optimization framework.
% Another advantage of CORAL is its use of GPU primitives offloading for expensive operations, including distance computation, top-$k$ selection, pairwise distance evaluation, and batched graph updates.

Moreover, to scale the index beyond the GPU's memory capacity, CORAL adopts a hierarchical index structure across the GPU, main memory, and disk.
The complete graph is stored on disk, and the main memory caches the active nodes. 
During graph updates, CORAL loads only the metadata of nodes that are necessary for distance evaluation, candidate generation, and adjacency modification.
We separate node metadata into read-only neighbor information and mutable adjacency lists. Neighbor information is suitable for sharing across computations, whereas exclusive access is enforced for adjacency-list updates to avoid conflicts.
% This design couples data movement with graph update semantics, ensuring that transferred nodes are directly consumed by the corresponding computation and avoiding redundant loading. 
By aligning data transfer with FNAP and TCE at each batch, CORAL reduces unnecessary data movement while preserving correctness, enabling a scalable index on datasets ranging from million-scale to billion-scale.

% To support large-scale deployments, the system adopts a hierarchical memory architecture for graph storage.
% Graph data are managed across GPU memory, CPU memory, and disk in a demand-driven manner.
% This design minimizes data movement and maintenance, maintains high GPU utilization, and enables seamless scaling from million–scale datasets to billion-scale datasets.

\noindent \textbf{kNV Acquisition for Datasets of Different Scales.}
Existing ANNS indexing schemes often show skewed query coverage on the base dataset in OOD scenarios.
CORAL addresses this problem with flexible kNV acquisition and coverage-aware adaptive pruning, which rely on improved use of query modal information.

When the base dataset fits in GPU memory, CORAL computes exact kNV and stores the result in GPU memory. For larger datasets, exact kNV computation becomes inefficient due to frequent host-device data transfer. 
Thus, CORAL uses approximate indexes, such as IVF~\cite{sivic2003video, babenko2015inverted} and IVFPQ~\cite{jegou2010product}, to acquire kNV. Since this stage aims to capture relatively close neighbors from the perspective of the query modality rather than achieve perfect recall, such an approximate acquisition of kNV is sufficient for indexing.

To address skewed kNV coverage, CORAL introduces Coverage-Aware Adaptive Pruning (CAAP), which periodically rebuilds the kNV search index. 
During construction, CORAL records the \emph{access frequency} of each base node, i.e., the number of times it appears in query kNV results, and monitors the fraction of newly covered isolated nodes. 
Once this fraction drops below a threshold $\tau$, indicating that newly retrieved kNV is increasingly dominated by already well-connected nodes, the kNV search index is rebuilt.

During rebuilding, the base nodes whose access frequency exceeds a threshold $t$ are identified as \emph{hot nodes}, as illustrated in Figure~\ref{fig:caap}, and are excluded from subsequent kNV acquisition, while remaining in the graph as routing nodes.
This operation directs subsequent queries toward less-frequently visited regions, so new kNV provides more useful structural information rather than repeatedly updating the already optimized regions.
The rebuild is lightweight, removing hot nodes from the candidate set for a Flat index or from the inverted lists for IVF/IVFPQ without retraining the coarse quantizer. Each construction run typically triggers 3 to 5 rebuilds, each taking about 5\,s on T2I-10M versus 635\,s total construction time.
A theoretical analysis of CAAP is provided in Section~\ref{subsec:theo}.

\begin{figure}[t]
    \centering
    \subfigure[NAP] {\label{fig:fnap-a}
        \includegraphics[width=0.22\columnwidth]{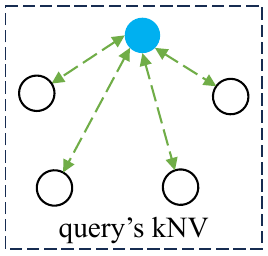}
    } 
    \subfigure[FNAP] {\label{fig:fnap-b}
        \includegraphics[width=0.33\columnwidth]{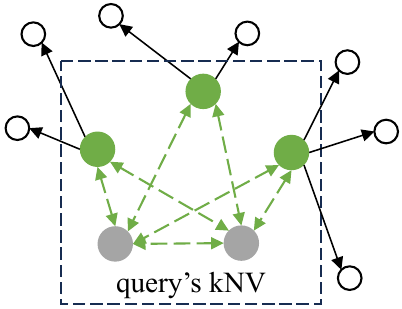}
    }
    \subfigure{
        \includegraphics[width=0.37\columnwidth]{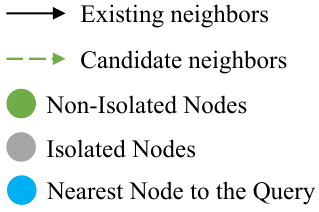}
    }
    
    \caption{Candidate-neighbor selection strategies used by NAP and FNAP.}
    \label{fig:fnap}
\end{figure}

\noindent \textbf{Fully Neighborhood-Aware Projection.}
Neighborhood-Aware Projection (NAP) is a basic step for constructing graph-based ANNS indexes under out-degree constraints. 
Given a set of nodes, NAP establishes directed edges so that each node maintains a limited number of outgoing neighbors (out-neighbors) while preserving local connectivity for beam search.

A typical instantiation is the pivot-centric NAP design, in which only the node closest to the query is selected as the pivot for each query kNV.
As shown in Figure~\ref{fig:fnap-a}, the pivot connects to the remaining $k-1$ candidate nodes, and reverse edges are added selectively. However, connections among non-pivot nodes are ignored, and several candidate nodes are pruned under the out-degree constraint without contributing to graph updates.
This design has two limitations. First, the expensive query kNV results are only partially used, leading to information loss during graph construction. 
Second, updates are centered on a single pivot, which limits parallelism and is poorly aligned with GPU execution, where updates for multiple nodes in the same kNV set could be processed in parallel.

As shown in Figure~\ref{fig:fnap-b}, FNAP generalizes NAP by allowing all nodes in a kNV to participate in adjacency updates as pivots. 
Given $\mathrm{kNV}(\hat{q}, D_b)=\{u_1,u_2,\ldots,u_k\}$ for an introduced query $\hat{q}$, FNAP treats each $u_i$ as a pivot and updates its outgoing edges using other nodes in $\mathrm{kNV}(\hat{q}, D_b)$. 
This design exploits neighborhood information more completely and exposes parallelism across pivots. 
To avoid excessive degree growth and unstable updates to existing neighborhoods, FNAP further employs a candidate selection mechanism that preserves out-degree constraints and graph stability.

FNAP categorizes nodes in $\mathrm{kNV}(\hat{q}, D_b)$ into \emph{isolated nodes}, which have no out-neighbors, and \emph{non-isolated nodes}, which already have adjacency lists.
For an isolated pivot, all other nodes in $\mathrm{kNV}(\hat{q}, D_b)$ are treated as candidates.
For a non-isolated pivot, the candidate set contains its existing neighbors together with only the isolated nodes in $\mathrm{kNV}(\hat{q}, D_b)$.
This design integrates new structural information while avoiding unnecessary updates of established edges.
After candidate selection, FNAP applies the standard NAP pruning rule to obtain the final adjacency list, which preserves locality while satisfying the out-degree constraint.
Since pivot updates are independent, FNAP naturally exposes parallelism for GPU acceleration, as detailed in Section~\ref{sec:gpu_implementation}.
Overall, FNAP serves as a unified graph-update GPU primitive for index construction and update.

\noindent \textbf{Targeted Connectivity Enhancement.}
After the graph construction converges, some base nodes remain unvisited as \emph{isolated nodes}, limiting graph connectivity and search performance. 
Existing Connectivity Enhancement methods~\cite{chen2024roargraph} usually perform exhaustive refinement on all nodes with $\rho_{CE}=1$. 
Although this improves connectivity, it also incurs high costs and generates redundant updates in already optimized regions.
To address this issue, CORAL applies TCE. It first focuses on the \emph{isolated nodes}, treats them as queries to search in the graph, and connects them with the retrieved neighbors by applying FNAP. This directly eliminates disconnected components and improves global reachability.

In addition, TCE leverages the \emph{hot nodes} identified during kNV acquisition. As shown in Figure~\ref{fig:caap}, these nodes are frequently retrieved by queries and are therefore close to the query distribution. TCE refines the neighborhoods of these nodes by updating connections to both retrieved nodes and existing neighbors. These hot nodes are also used as search entry points instead of random or centroid-based initialization~\cite{baranchuk2019learning}, thereby shortening the routing paths to query targets.
In practice, we use $\rho_{CE}$ to denote the ratio of selected \emph{isolated nodes} and \emph{hot nodes} to all nodes in the graph. A small $\rho_{CE}$ leaves the graph insufficiently connected, whereas a large one increases construction cost and over-refines regions already close to the query distribution. 
By default, $\rho_{CE}$ is set to 0.5 to achieve a better trade-off between connectivity and construction cost, as shown in the sensitivity analysis (Section~\ref{sec:parameter_sensitivity}).

TCE targets the residual structural defects that are primarily associated with \emph{isolated nodes} and \emph{hot nodes}. 
By selectively reconnecting isolated nodes and refining hot nodes, it improves graph connectivity and query-guided navigability without redundantly updating well-constructed regions.
% TCE is effective because residual structural defects are concentrated on \emph{isolated nodes} and \emph{hot nodes}. By reconnecting isolated nodes and refining hot nodes, TCE improves graph completeness and query-aligned navigability while avoiding redundant refinement of well-formed regions.

\subsection{Topology-Aware Index Update}
\label{sec:update}

\begin{figure}[t]
    \centering
    \includegraphics[width=\linewidth]{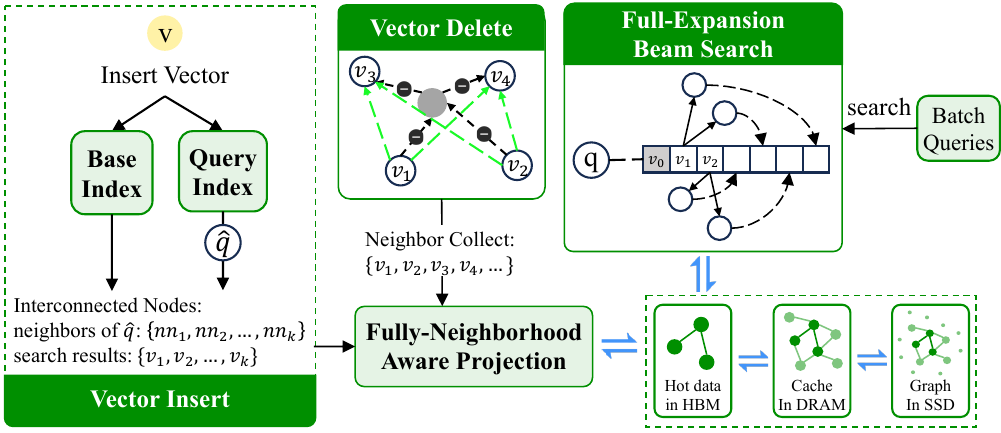}
    \caption{Overview of index search and update.}
    \label{fig:sid}
\end{figure}

% \sld{After index construction, the system supports dynamic index updates in parallel with ongoing vector searches.}
% Dynamic updates include vector insertions and deletions, both of which may alter local graph structure.

Figure~\ref{fig:sid} illustrates the interaction between search and update operations.
All graph updates, including vector insertions and deletions, are applied via the graph update primitive FNAP.

% \noindent \textbf{Vector Insertion}

The objective of vector insertion is to integrate new vectors into the index while preserving graph connectivity.
When an inserted vector $v$ is accompanied by a query vector $\hat{q}$, the system uses $\hat{q}$ to retrieve approximate nearest neighbors from the existing index.
FNAP is then applied to the node set comprising $v$ and its retrieved neighbors to establish adjacency relationships.
If no query vector is available, the system identifies an accompanying query vector by searching a Flat index built on previously ingested query vectors.
This lightweight index is retained from construction, with size $|Q_b| \ll |D_b|$ (about 0.7 GB for T2I-10M and 8\% of the graph footprint). 
It is maintained dynamically, appending query vectors upon insertions and removing queries whose retained kNV count drops below a threshold upon deletion, so its distribution evolves in sync with the base graph.
The insertion process then proceeds in the same manner as when an accompanying query vector is present.
In both cases, insertion is guided by the query modal semantics rather than solely by proximity to the base dataset's data distribution.

After insertion, CORAL enhances the new vectors' global connections similarly to TCE. Specifically, the inserted nodes are queried in the graph, and the nodes retrieved are treated as candidates for updating the neighbors of the inserted nodes.
This operation also safeguards insertion quality: even if the surrogate query is imprecise, it only provides an approximate search direction, and the graph search independently discovers global connections and corrects any inaccuracy.

% \noindent \textbf{Vector Deletion}

The objective of vector deletion is to remove nodes from the index while preserving graph connectivity.
CORAL physically removes the node to be deleted and repairs the affected graph topology.
When a node $d$ is deleted, its in-neighbors and out-neighbors are collected.
FNAP is then applied to reconstruct adjacency relationships among these neighbors, compensating for the structural disruption caused by the removal of $d$.
This repair mechanism preserves reachability and avoids structural degradation.

\subsection{GPU Primitives}
\label{sec:gpu_implementation}

This section describes how CORAL is accelerated via two key GPU primitives to achieve high throughput and scalability. 

\noindent \textbf{FNAP on the GPU.}
Given the kNV sets of a batch of $B$ queries, denoted as $\{\mathcal{K}_b\}_{b=1}^{B}$, FNAP processes each $\mathcal{K}_b$ within an independent graph update task. 
Each $\mathcal{K}_b=\{\mathbf{x}_{b,1},\ldots,\mathbf{x}_{b,k}\}\subset\mathbb{R}^d$ contains $k$ base vectors in a $d$-dimensional space, where $\mathbf{x}_{b,i}$ denotes the $i$-th vector in the $b$-th kNV. 
For each vector $\mathbf{x}_{b, i}$ treated as a pivot, FNAP takes its old neighbor set $\mathcal{N}^{\mathrm{old}}_{b, i}$ as input and produces an updated neighbor set $\mathcal{N}^{\mathrm{new}}_{b, i}$ under the maximum out-degree constraint $M$.

FNAP exposes parallelism at both the neighborhood and node granularity.
For each $\mathcal{K}_b$, pairwise distances $\delta(\mathbf{x}_{b,i},\mathbf{x}_{b,j})$ are computed in parallel within a GPU block.
Candidate generation first identifies the isolated subset 
$\mathcal{A}_b$.
For an isolated pivot $\mathbf{x}_{b, i}$, $\mathcal{C}_{b, i}$ contains all other vectors in the same kNV; otherwise, it contains the other isolated vectors and the pivot's old neighbors.
This step fuses pairwise distance computation with old-neighbor loading into a single GPU kernel, uses tiled shared-memory buffering to reuse vector data and reduce global memory access, and reuses symmetric distances to avoid redundant computation.

After candidate generation, each pivot $\mathbf{x}_{b, i}$ is processed independently by a GPU thread.
Its candidate set $\mathcal{C}_{b, i}=\{c_1,\ldots,c_m\}$, where $m=|\mathcal{C}_{b, i}|$, is sorted by increasing distance to the pivot, i.e., $\delta(\mathbf{x}_{b, i},\cdot)$. 
The algorithm scans the sorted candidates sequentially. 
If candidate $c_s$ has not been discarded, it is inserted into $\mathcal{N}^{\mathrm{new}}_{b, i}$ until the degree constraint is reached. 
Once $c_s$ is retained, later candidates $c_t$ with $t>s$ are checked in parallel and discarded if
$\delta(\mathbf{x}_{b,i},c_t)>\delta(c_s,c_t)$,
which means that $c_t$ is closer to the retained neighbor $c_s$ than to the pivot $\mathbf{x}_{b,i}$. 
This pruning rule reduces redundancy in the retained neighbor set by removing candidates that are closer to a retained neighbor than to the pivot.
Finally, graph updates are performed in batches, where each pivot independently writes back its updated neighbor set $\mathcal{N}^{\mathrm{new}}_{b, i}$. 
% A bitmap-based atomic mechanism is used to avoid redundant updates during concurrent writes, and structural statistics are aggregated in shared memory before being committed to global memory, reducing synchronization overhead. 
Overall, these optimizations enable high-throughput execution on the GPU.

\noindent \textbf{Full-Expansion Beam Search on GPU.}
Index search follows a greedy beam search strategy. 
As shown in Figure~\ref{fig:sid}, a fixed-size beam retains the current set of candidate nodes. 
At each iteration, the algorithm expands all unvisited nodes in the beam using their adjacency lists, rather than expanding only one node at a time, to expose more parallelism on the GPU. 
The newly discovered nodes are then ranked by their distances to the query and used to update the beam. 
This process repeats until the beam converges.

The selection of entry points plays a critical role in improving performance.
CORAL uses a hybrid entry strategy.
Following the TCE design, half of the entry points are selected from the hot node set obtained during kNV acquisition.
The remaining entry points are selected from the entire graph using Farthest-Point Sampling (FPS) based on hop distance, which improves coverage of regions not represented by hot nodes.
Concretely, FPS maintains for each node its minimum hop distance to the current entry set and iteratively selects the node with the largest such distance as the next entry point. This yields a diverse set of entries with global graph coverage. 

Based on these entry nodes, CORAL performs beam search using a GPU-based parallel execution strategy. Each query is assigned to a single thread block, allowing the block to maintain the beam and intermediate candidates in shared memory.
Within each block, all unexpanded nodes in the beam are expanded concurrently by each warp, and the threads in the warp evaluate their out-neighbors in parallel. 
In addition, CORAL maintains a visited bitmap to avoid expanding duplicate nodes and a dynamically updated distance threshold to guard against invalid candidate insertion, so nodes that cannot enter the beam are filtered early.
These operations align the irregular control flow of graph search with GPU execution. As a result, CORAL can explore multiple routing paths in parallel while preserving the search semantics required for efficient high-recall retrieval with limited beam queue length.

\subsection{Theoretical Analysis}
\label{subsec:theo}
We provide a theoretical justification for the proposed index construction strategies. 
Let $G_m$ denote the graph after ingesting $m$ auxiliary query vectors for construction. 
For each base node $v\in D_b$, let $h_m(v)$ be its accumulated access count in queries' kNV. 
The empirical access probability distribution over the base dataset is defined as
$p_m(v)=h_m(v)/(km)$,
where $\sum_{v\in D_b} p_m(v)=1$ since each query contributes exactly $k$ hits.
Following graph-based ANN analysis~\cite{malkov2020efficient,shrivastava2023theoretical,fu2019fast, diwan2024navigable, dong2011efficient}, we characterize the quality of the intermediate graph using the following three metrics:
\textbf{(a) Coverage Entropy:}
$\mathcal{H}(p_m)=-\sum_{v\in D_b} p_m(v)\log p_m(v)$, which measures how evenly query coverage is distributed over nodes.
\textbf{(b) Connectivity:}
$\rho(G_m)=|X_0(G_m)|/|D_b|$,
where $X_0(G_m)$ is the set of isolated nodes in $G_m$.
\textbf{(c) Routing cost:} $\mathcal{L}(q;G_m,\pi)$,
the number of visited nodes when processing query $q$ on $G_m$ under entry distribution $\pi$.

\begin{theorem}
\label{theo}
Let $\circ$ denote a graph indexing scheme for OOD ANNS, and $\star$ its enhanced scheme with the proposed CAAP+FNAP+TCE techniques, both configured with the same out-degree constraint and $m$ auxiliary query vectors.
Then, under the conditions of Corollary~\ref{coro:1}, Corollary~\ref{coro:2}, and Lemma~\ref{lemma:3}, respectively, the following holds:
$\mathcal{H}(p_m^{\star}) \ge \mathcal{H}(p_m^{\circ})$,
$\rho(G_m^{\star}) \le \rho(G_m^{\circ})$, and
$\mathcal{L}(q; G_m^{\star},\pi_H) < \mathcal{L}(q; G_m^{\circ},\pi_0)$.
\end{theorem}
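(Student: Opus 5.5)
The statement splits into three claims, one per technique, and I would prove each separately from the corresponding result referenced in the statement (Corollary~\ref{coro:1} for CAAP, Corollary~\ref{coro:2} for FNAP, Lemma~\ref{lemma:3} for TCE). The theorem then follows by conjunction. The only global bookkeeping is that both schemes use the same out-degree bound $M$ and the same $m$ queries. Since $\circ$ and $\star$ differ only in the three modules, each inequality can be isolated to the module that affects its metric.

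\textbf{Entropy.} I would model CAAP as follows. At each rebuild, nodes with $h(v)>t$ are removed from the kNV acquisition candidate set. Every later query therefore places its $k$ hits on nodes whose counts are at most $t$ (or on fresh nodes), rather than on the hot set.
\begin{itemize}
\item I would show that such a transfer of hit mass from high-count to low-count nodes is a Robin Hood transfer. The resulting count vector is then majorized by the one produced under $\circ$.
\item Schur-concavity of $\mathcal{H}$ then gives $\mathcal{H}(p_m^{\star})\ge\mathcal{H}(p_m^{\circ})$. I expect this to be essentially the content of Corollary~\ref{coro:1}.
\item The subtle point is coupling the two query streams. Both schemes see the same $\hat q$, but the returned sets differ. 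I would argue that each redirected hit lands on a node whose current count is no larger than that of the node it replaces, and apply majorization step by step.
\item This monotone-replacement condition is exactly what I would quote from the corollary's hypotheses rather than reprove.
\end{itemize}

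\textbf{Connectivity.} I would compare isolated sets per query.
\begin{itemize}
\item Under NAP, a kNV node leaves $X_0$ only if it is the pivot, or if it is a pruned-surviving target that receives a reverse edge. Non-pivot nodes never gain out-edges.
\item Under FNAP, every isolated pivot takes all other kNV members as candidates. Since $k\ge 2$, its pruned list is nonempty, because the nearest candidate always survives the pruning rule.
\item Hence every node appearing in some kNV ends with out-degree at least one. It follows that $X_0(G^\star)\subseteq D_b\setminus\bigcup\mathrm{kNV}$ after FNAP.
\item CAAP enlarges $\bigcup\mathrm{kNV}$, since by the entropy step there are more distinct covered nodes. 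TCE additionally links the remaining isolated nodes.
\item Therefore $|X_0(G_m^\star)|\le|X_0(G_m^\circ)|$. One must also check that FNAP never empties an existing list: a non-isolated pivot keeps its old neighbors among its candidates, so its list stays nonempty. This gives $\rho(G_m^{\star})\le\rho(G_m^{\circ})$, matching Corollary~\ref{coro:2}.
\end{itemize}

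\textbf{Routing cost.} This is the main obstacle, because the strict inequality needs a quantitative model of beam search. I would invoke Lemma~\ref{lemma:3}, which I expect bounds $\mathcal{L}$ by a hop distance from the entry set to the query's target region times the per-hop expansion cost $\le M$.
\begin{itemize}
\item Hot nodes lie in $\mathrm{kNV}$ sets of many auxiliary queries. Under the lemma's assumption that $q\sim\mathcal{P}_q$ is aligned with the auxiliary set, the expected hop distance from $\pi_H$ to $\mathrm{kNV}(q)$ is strictly smaller than from $\pi_0$.
\item The graph $G^\star$ has fewer dead ends, by the connectivity step, and its hot neighborhoods are refined by TCE.
\item Combining these with the equal degree bound yields the strict inequality $\mathcal{L}(q; G_m^{\star},\pi_H) < \mathcal{L}(q; G_m^{\circ},\pi_0)$.
\end{itemize}
Strictness comes from the lemma's nondegeneracy condition that at least one hot node is strictly closer in hops than the random entry.
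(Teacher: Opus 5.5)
Your top-level structure matches the paper's. Theorem~\ref{theo} is obtained by conjoining Corollary~\ref{coro:1}, Corollary~\ref{coro:2}, and Lemma~\ref{lemma:3}. Your entropy argument (Robin Hood transfers plus Schur-concavity) is Corollary~\ref{coro:1} in majorization language. The paper avoids your step-by-step coupling by comparing final distributions against the cap $\bar p=t/(km)$: every donor lies above $\bar p$, every receiver below, and each shift satisfies Lemma~\ref{lemma:1}.

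The real problem is the routing-cost part. The content you ascribe to Lemma~\ref{lemma:3} cannot yield a strict inequality, for four reasons.
\begin{itemize}
\item Upper bounds of the form $\mathcal{L}\le M\cdot(\text{hop distance})$ on each side never order the two costs.
\item A smaller \emph{mean} hop distance does not order $\mathbb{E}[\Phi_q(d)]$ when entry cost is a nonlinear increasing function of distance.
\item One hot node being closer than a random entry gives no strictness for an expectation over $\pi_H$.
\item ``Fewer dead ends'' says nothing about the cost after the search reaches the target region. The competitor is exhaustive CE, which processes every node, so FNAP's isolated-node advantage need not persist, and there is no structural reason its local cost is higher.
\end{itemize}
The missing ingredients are exactly the lemma's hypotheses. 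Split $L(q;G,s)=T_{\mathrm{ent}}(q;G,s)+T_{\mathrm{loc}}(q;G)$ around a graph-independent region $R(q)$, and assume:
\begin{itemize}
\item (R1) $T_{\mathrm{ent}}=\Phi_q(d_G(s,R(q)))$ for a single strictly increasing $\Phi_q$ shared by both graphs;
\item (R2) first-order dominance $F_H(r)\ge F_0(r)$ for all $r$, strict for some $r$;
\item (R3) $T_{\mathrm{loc}}(q;G^{\mathrm{TCE}})\le T_{\mathrm{loc}}(q;G^{\mathrm{CE}})$.
\end{itemize}
Abel summation then gives $\mathbb{E}_{\pi_0}[T_{\mathrm{ent}}]-\mathbb{E}_{\pi_H}[T_{\mathrm{ent}}]=\sum_{r\ge 0}(\Phi_q(r+1)-\Phi_q(r))(F_H(r)-F_0(r))>0$, and (R3) closes the argument. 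The degree bound plays no role.

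In the connectivity part, your FNAP-versus-NAP reasoning matches Lemma~\ref{lemma:2} and Corollary~\ref{coro:2}. However, the step ``CAAP enlarges $\bigcup\mathrm{kNV}$, since by the entropy step there are more distinct covered nodes'' is a non sequitur, because higher entropy does not imply larger support. For example, $(\frac12,\frac14,\frac14)$ has entropy $1.5$ bits, while $(0.7,0.1,0.1,0.1)$ has about $1.36$ bits despite its larger support.

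The paper needs no such step. Corollary~\ref{coro:2}, whose conditions the theorem invokes, fixes the same kNV sequence $U_1,\dots,U_m$ for both schemes. Induction on $\ell$ then gives $X_0(G_m^{\mathrm{fnap}})=D_b\setminus\bigcup_\ell U_\ell\subseteq X_0(G_m^{\circ})$. If you want to account for CAAP changing the kNV sets, derive support monotonicity from the capping structure itself, not from the entropy inequality. In Corollary~\ref{coro:1}'s model, hot nodes keep mass $\bar p>0$ and all other nodes only receive mass.
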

We establish this theorem via the following lemmas and corollaries.

\begin{lemma}
\label{lemma:1}
Let $p$ be a probability distribution over $D_b$, and let $v_i, v_j\in D_b$ with $p_i>p_j$, where $p_i=p(v_i)$ and $p_j=p(v_j)$. Shifting probability mass $\varepsilon\in\big(0,\frac{p_i-p_j}{2}\big]$ from $p_i$ to $p_j$ strictly increases $\mathcal{H}(p)$.
\end{lemma}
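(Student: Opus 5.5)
The plan is to isolate the two coordinates that change and study the entropy as a one-variable function of $\varepsilon$. Write $\mathcal{H}(p)=\sum_{v} -p(v)\log p(v)$. Only the terms for $v_i$ and $v_j$ are affected by the shift. Define $\phi(x)=-x\log x$ on $[0,1]$, with the convention $\phi(0)=0$, and set
\[
g(\varepsilon)=\phi(p_i-\varepsilon)+\phi(p_j+\varepsilon).
\]
Then $\mathcal{H}(p^{(\varepsilon)})-\mathcal{H}(p)=g(\varepsilon)-g(0)$. For $\varepsilon\le (p_i-p_j)/2$ we have $p_i-\varepsilon\ge p_j+\varepsilon\ge p_j\ge 0$, so both new masses are nonnegative and the shifted vector is still a probability distribution. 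It therefore suffices to show $g(\varepsilon)>g(0)$ for every $\varepsilon\in\big(0,\frac{p_i-p_j}{2}\big]$.

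The key step is a derivative computation. Since $\phi'(x)=-\log x-1$, on the open interval $\big(0,\frac{p_i-p_j}{2}\big)$ we get
\[
g'(\varepsilon)=\log(p_i-\varepsilon)-\log(p_j+\varepsilon).
\]
This is strictly positive because $p_i-\varepsilon>p_j+\varepsilon>0$ strictly inside the interval. The function $g$ is continuous on the closed interval, since $\phi$ is continuous at $0$. By the mean value theorem, $g(\varepsilon)-g(0)=\varepsilon\, g'(\xi)>0$ for some $\xi\in(0,\varepsilon)$, and this covers the endpoint $\varepsilon=(p_i-p_j)/2$ as well.

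An alternative route is strict concavity of $\phi$. The pair $(p_i-\varepsilon,p_j+\varepsilon)$ is a nontrivial convex combination of $(p_i,p_j)$ and its swap $(p_j,p_i)$, and the sum $\phi(p_i)+\phi(p_j)$ is the same at both points. Strict concavity then gives the strict increase.

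The only delicate point is the boundary case $p_j=0$, where $\phi$ is not differentiable at $0$. The mean value argument handles this, because it needs differentiability only on the open interval, and there $p_j+\varepsilon>0$. I would state this explicitly so that shifting mass onto previously uncovered (isolated) nodes is included. That case is exactly the one CAAP relies on in Corollary~\ref{coro:1}.
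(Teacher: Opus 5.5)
Your proposal is correct and follows essentially the same route as the paper: you isolate the two affected entropy terms as $g(\varepsilon)$, compute $g'(\varepsilon)=\log\frac{p_i-\varepsilon}{p_j+\varepsilon}>0$ on the open interval, and conclude $g(\varepsilon)>g(0)$. Your explicit mean-value step and your handling of the boundary case $p_j=0$ make rigorous what the paper's proof leaves implicit.
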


\begin{proof}
The sum of the entropy terms of $v_i$ and $v_j$ with shifted probability $\varepsilon$ is
\begin{equation}
\label{equ:increase_entropy}
g(\varepsilon)
=
-\,(p_i-\varepsilon)\log(p_i-\varepsilon)
-\,(p_j+\varepsilon)\log(p_j+\varepsilon).
\end{equation}
Differentiating Equation~\eqref{equ:increase_entropy} yields
$g'(\varepsilon)=\log\frac{p_i-\varepsilon}{p_j+\varepsilon}$.
Since $p_i>p_j$ and $0<\varepsilon\le(p_i-p_j)/2$, we have 
$p_i-\varepsilon\ge p_j+\varepsilon$. 
It follows that $g'(\varepsilon)\ge 0$ on this interval, with $g'(\varepsilon)>0$ for $\varepsilon<(p_i-p_j)/2$, and hence $g(\varepsilon)>g(0)$.
The entropy terms of all other nodes are unchanged, so $\mathcal{H}(p)$ strictly increases.
\qedhere
\end{proof}

\begin{corollary}
\label{coro:1}
Assume that once $h_m(v)$ reaches $t$, CAAP excludes $v$ from subsequent kNV acquisition, so every access count is capped at $t$.
Let $p^{\circ}_m$ be the access distribution without CAAP, and let $p^{\mathrm{caap}}_m$ be the access distribution under CAAP.
If the run without CAAP yields $H_m=\{v\in D_b:\ h_m(v)>t\}\neq\emptyset$, then $\mathcal{H}(p^{\mathrm{caap}}_m)>\mathcal{H}(p^{\circ}_m)$.
\end{corollary}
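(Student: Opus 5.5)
The plan is to realize $p^{\mathrm{caap}}_m$ from $p^{\circ}_m$ by a finite sequence of mass shifts, each of the form covered by Lemma~\ref{lemma:1}, and then chain the strict increases. The modelling assumption that makes this possible is a coupling of the two runs. Both runs ingest the same $m$ queries, so the total number of hits is $km$ in each. Under CAAP, every hit that would have gone to a node $v$ after $h(v)$ reached $t$ is redirected to some node whose count is still below $t$; in the idealized form, this is the next-closest non-excluded node. Consequently, $h^{\mathrm{caap}}_m(v)=\min\{h^{\circ}_m(v),t\}$ plus whatever redirected hits $v$ receives, and every count is at most $t$.

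\textbf{Step 1 (surplus and recipients).} Define the surplus $S=\sum_{v\in H_m}\big(h^{\circ}_m(v)-t\big)>0$, which is positive because $H_m\neq\emptyset$. Under CAAP exactly $S$ hits are removed from the hot nodes and deposited on nodes $u$ with $h^{\circ}_m(u)<t$. Since each $u$ can absorb at most $t-h^{\circ}_m(u)$ units under the cap, the final counts are reached by moving mass from nodes of count $>t$ to nodes of count $<t$. I would assume $|D_b|\,t\ge km$ so that enough capacity exists; this is implicit in "capped at $t$".

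\textbf{Step 2 (decompose into unit transfers).} Order the transfers one unit $1/(km)$ of probability at a time. At any intermediate stage, take the donor $v_i$ with current count $h_i>t$, hence $h_i\ge t+1$, and the recipient $v_j$ with current count $h_j<t$, hence $h_j\le t-1$. This gives $h_i-h_j\ge 2$, so $\varepsilon=1/(km)\le (p_i-p_j)/2$, which is exactly the admissible range in Lemma~\ref{lemma:1}. Each transfer therefore strictly increases $\mathcal{H}$. After $S\ge 1$ transfers we arrive at $p^{\mathrm{caap}}_m$, and chaining the inequalities gives $\mathcal{H}(p^{\mathrm{caap}}_m)>\mathcal{H}(p^{\circ}_m)$.

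\textbf{Main obstacle: justifying the coupling in Step 1.} CAAP is not literally a pointwise capping: exclusion happens only at periodic rebuilds, and redirected hits may land on arbitrary nodes, possibly other nodes that later become hot. I would handle this by stating the corollary's hypothesis as the idealized cap, with all counts $\le t$, and with CAAP's counts dominating $\min\{h^{\circ},t\}$ pointwise. Under that hypothesis, every donor unit comes from a node above $t$ and every recipient unit goes to a node below $t$, which is all Step 2 needs.

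If one prefers to avoid the unit-by-unit argument, an alternative is majorization. The vector $h^{\mathrm{caap}}$ is majorized by $h^{\circ}$, because it is obtained by Robin Hood transfers from above $t$ to below $t$, and it differs from $h^{\circ}$ since $H_m\neq\emptyset$. Strict Schur-concavity of entropy then gives the strict inequality. I would nonetheless present the elementary chain, since it uses only Lemma~\ref{lemma:1}.
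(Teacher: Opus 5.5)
Your proposal is correct and follows essentially the same route as the paper: both realize $p^{\mathrm{caap}}_m$ from $p^{\circ}_m$ as a finite sequence of Lemma~\ref{lemma:1} shifts, each moving mass from a donor above the cap $\bar{p}=t/(km)$ to a receiver below it, and then chain the strict entropy increases. The differences are minor. You use unit shifts of $1/(km)$ together with integrality ($h_i\ge t+1$, $h_j\le t-1$) to get $\varepsilon\le(p_i-p_j)/2$, whereas the paper bounds each shift by $\min\{p_i-\bar{p},\bar{p}-p_j\}$; and your explicit coupling hypothesis (CAAP counts dominating $\min\{h^{\circ}_m,t\}$ pointwise) states precisely an assumption that the paper only asserts.
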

\begin{proof}
Let $\bar{p}=t/(km)$; then $p^{\mathrm{caap}}_m(v)\le \bar{p}$ for every $v\in D_b$, while $p^{\circ}_m(u)>\bar{p}\ge p^{\circ}_m(v)$ for all $u\in H_m$ and $v\notin H_m$.
$p^{\mathrm{caap}}_m$ is obtained from $p^{\circ}_m$ by moving the excess mass of hot nodes from nodes above $\bar{p}$ to nodes below $\bar{p}$.
Since $H_m\neq\emptyset$, this transfer is nonempty and decomposes into a finite sequence of shifts, each moving $\varepsilon\le\min\{p_i-\bar{p},\ \bar{p}-p_j\}$ from a donor with current mass $p_i>\bar{p}$ to a receiver with current mass $p_j<\bar{p}$.
Then $p_i-\varepsilon\ge \bar{p}\ge p_j+\varepsilon$, which gives $\varepsilon\le(p_i-p_j)/2$, so every shift satisfies the condition of Lemma~\ref{lemma:1} and strictly increases the entropy.
\qedhere
\end{proof}

\begin{lemma}
\label{lemma:2}
At any step $\ell$ with kNV set $U_\ell$, let $B_\ell$ and $F_\ell$ denote the sets of isolated nodes repaired by the pivot-centric NAP and by FNAP, respectively. Then $B_\ell\subseteq F_\ell$.
\end{lemma}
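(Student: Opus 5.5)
The argument compares the two update rules node by node at the fixed step $\ell$, starting from the same graph state. Both rules are applied to the same graph $G$ (the graph just before step $\ell$) and the same kNV set $U_\ell=\{u_1,\ldots,u_k\}$. Let $A_\ell\subseteq U_\ell$ be the nodes of $U_\ell$ that are isolated in $G$, meaning they have no out-neighbors. An isolated node $x\in A_\ell$ counts as \emph{repaired} at step $\ell$ if it has a nonempty adjacency list after the update. The only nodes whose adjacency lists either scheme rewrites are pivots, so I first record which nodes each scheme treats as pivots.

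In the pivot-centric NAP, the single pivot is $u^\ast=\arg\min_{u\in U_\ell}\delta(\hat q,u)$. The only isolated node NAP can repair is therefore $u^\ast$ itself, provided $u^\ast\in A_\ell$ and its pruned list is nonempty. I would treat possible reverse edges separately. A reverse edge $u\to u^\ast$ is added from a node $u$ to $u^\ast$, so it gives $u$ an out-neighbor and could repair an isolated $u$. Under this reading, $B_\ell$ consists of $u^\ast$ (if repaired) together with those $u\in A_\ell$ that received a reverse edge. For either kind of element, the task is to show FNAP also gives that node a nonempty list.

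Under FNAP every $u_i\in U_\ell$ is a pivot. For an isolated pivot $x\in A_\ell$, the candidate set is $\mathcal{C}_x=U_\ell\setminus\{x\}$. The key step is that the NAP pruning rule never empties a nonempty candidate list. The scan visits candidates in increasing distance to the pivot, and the first candidate $c_1$ cannot have been discarded, since discarding only happens after some earlier candidate has been retained. So $c_1$ is inserted, because $M\ge 1$. Hence whenever $k\ge 2$, every $x\in A_\ell$ gets at least one out-neighbor under FNAP. This gives $F_\ell=A_\ell$, and $B_\ell\subseteq A_\ell=F_\ell$ follows immediately, including the reverse-edge case. The degenerate case $k=1$ is handled directly: $\mathcal{C}_x=\emptyset$ in both schemes, so $B_\ell=F_\ell=\emptyset$.

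The main obstacle is the definition of ``repaired'' and the dependence on simultaneous writes. In particular, I must check that FNAP's batched write-back for other pivots cannot overwrite an isolated pivot's new list. This holds because each pivot writes only its own adjacency list $\mathcal{N}^{\mathrm{new}}$. Second, if ``repaired'' is instead read as ``gaining an in-edge'', the comparison needs a different argument. I would first try the out-edge reading above, which matches the definition of isolated nodes in FNAP. I would then remark that the in-edge reading reduces to the same pruning argument applied to the nearest non-discarded candidate.
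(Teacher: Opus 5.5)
Your proposal is correct and follows essentially the same argument as the paper: under FNAP every isolated node of $U_\ell$ is a pivot with nonempty candidate set $U_\ell\setminus\{u\}$, so $F_\ell$ is exactly the set of isolated nodes of $U_\ell$, while the pivot-centric NAP can only repair isolated nodes among the pivot and its selected (reverse-edge) neighbors, which gives $B_\ell\subseteq F_\ell$. Your checks that pruning always keeps the first sorted candidate, and your treatment of the degenerate case $k=1$, spell out what the paper leaves implicit here (it states the retention fact only later, in the proof of Corollary~\ref{coro:2}).
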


\begin{proof}
Let $\deg_G(u)$ denote the out-degree of node $u$ in graph $G$, and let $I_\ell=\{u\in U_\ell:\deg_{G_{\ell-1}}(u)=0\}$ be the isolated nodes in the current kNV set.
Under the pivot-centric NAP, only the pivot $p_\ell$ and its selected neighbors $N_\ell$ receive new incident edges, where $p_\ell$ is the node closest to the query and $N_\ell$ is the neighbor set selected for $p_\ell$ after NAP pruning.
Under FNAP, every node in $U_\ell$ is processed as a pivot, and each isolated node $u\in I_\ell$ receives at least one out-neighbor from its non-empty candidate set $C_\ell(u)=U_\ell\setminus\{u\}$, so $F_\ell=I_\ell$.
Hence $B_\ell\subseteq I_\ell\cap \big(\{p_\ell\}\cup N_\ell\big)\subseteq F_\ell$.
\qedhere
\end{proof}

\begin{corollary}
\label{coro:2}
Given the same sequence of kNV sets $U_1, U_2,\ldots, U_m$ for both schemes, let $G_m^{\circ}$ be the graph built by the pivot-centric NAP, and let $G_m^{\mathrm{fnap}}$ be the graph built by FNAP.
Then $\rho(G_m^{\mathrm{fnap}})\le \rho(G_m^{\circ})$, and the inequality is strict whenever some $u\in\bigcup_\ell U_\ell$ never belongs to $\{p_\ell\}\cup N_\ell$.
\end{corollary}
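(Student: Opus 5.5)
The plan is to derive Corollary~\ref{coro:2} from Lemma~\ref{lemma:2} using an induction over the steps $\ell=1,\dots,m$. The comparison quantity is the set of nodes that have been repaired at least once. Edges are only ever added to or rearranged among the out-lists of nodes that already have out-neighbors. Under both schemes the NAP pruning rule keeps at least the nearest candidate, so I will first argue that a node with positive out-degree never becomes isolated again. As a result, $X_0(G_m)$ is exactly the set of nodes of $D_b$ never repaired in steps $1,\dots,m$. I will therefore write
\[
X_0(G_m^{\circ}) = D_b\setminus \textstyle\bigcup_{\ell} B_\ell^{\circ},\qquad
X_0(G_m^{\mathrm{fnap}}) = D_b\setminus \textstyle\bigcup_{\ell} F_\ell,
\]
where the superscripts indicate that the repaired sets are computed along each scheme's own trajectory.

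The main obstacle is that the two trajectories differ. The isolated set $I_\ell$ at step $\ell$ depends on the scheme, so Lemma~\ref{lemma:2} cannot be applied naively to $B_\ell^{\circ}$ and $F_\ell$ computed on different graphs. I will avoid this by proving directly the invariant $X_0(G_\ell^{\mathrm{fnap}})\subseteq X_0(G_\ell^{\circ})$ for every $\ell$.

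\textbf{Base case.} At $\ell=0$ both graphs are empty, so the two isolated sets coincide.

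\textbf{Inductive step.} Take $u\in X_0(G_\ell^{\mathrm{fnap}})$; it is still isolated after step $\ell$ under FNAP.
\begin{itemize}
\item If $u\in U_\ell$, then $u$ was isolated before step $\ell$. By the argument of Lemma~\ref{lemma:2}, FNAP gives every isolated node of $U_\ell$ at least one out-neighbor, since $C_\ell(u)=U_\ell\setminus\{u\}\neq\emptyset$ when $k\ge 2$. This contradicts $u$ still being isolated, so this case cannot occur.
\item Hence $u\notin U_\ell$. Neither scheme touches $u$'s out-list at step $\ell$.
\item Since $u$ is also isolated before step $\ell$ under FNAP, the induction hypothesis gives $u\in X_0(G_{\ell-1}^{\circ})$. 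Because $u\notin U_\ell$, it remains isolated under NAP after step $\ell$.
\end{itemize}
This proves the invariant. The invariant in fact gives the stronger statement $X_0(G_m^{\mathrm{fnap}})\subseteq X_0(G_m^{\circ})$, and the non-strict inequality $\rho(G_m^{\mathrm{fnap}})\le\rho(G_m^{\circ})$ follows by cardinality.

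\textbf{Strictness.} Take $u\in\bigcup_\ell U_\ell$ that never lies in $\{p_\ell\}\cup N_\ell$. Under pivot-centric NAP, out-edges are created only for the pivot and its selected neighbors, via reverse edges. So $u$ never gains an out-neighbor and $u\in X_0(G_m^{\circ})$.

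Now let $\ell^*$ be the first step with $u\in U_{\ell^*}$. Under FNAP, either $u$ is already non-isolated at that point, or it is isolated and gets repaired at step $\ell^*$ as argued above. In both cases $u\notin X_0(G_m^{\mathrm{fnap}})$. The inclusion is therefore proper and the inequality is strict.

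The remaining details to state explicitly are two assumptions. The first is that the pruning rule always retains at least one candidate, which guarantees that repaired nodes are never re-isolated. The second is that $k\ge 2$.
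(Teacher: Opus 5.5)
Your proposal is correct and follows essentially the same argument as the paper. Both run an induction on $\ell$ from $X_0=D_b$, combining $X_0(G_\ell^{\mathrm{fnap}})=X_0(G_{\ell-1}^{\mathrm{fnap}})\setminus U_\ell$ with $X_0(G_\ell^{\circ})\supseteq X_0(G_{\ell-1}^{\circ})\setminus U_\ell$ (you verify these element-wise), and both get strictness by unfolding to $D_b\setminus\bigcup_\ell U_\ell$ versus a superset of $D_b\setminus\bigcup_\ell(\{p_\ell\}\cup N_\ell)$. The two assumptions you state explicitly, $k\ge 2$ and that pivot-centric NAP only alters out-lists of nodes in $U_\ell$, are ones the paper's proof uses only implicitly.
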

\begin{proof}
Since pruning always retains at least one candidate, a node that has acquired an out-neighbor never becomes isolated again.
FNAP processes every isolated node of $U_\ell$ as a pivot with a non-empty candidate set, so $X_0(G_\ell^{\mathrm{fnap}})=X_0(G_{\ell-1}^{\mathrm{fnap}})\setminus U_\ell$, whereas Lemma~\ref{lemma:2} only gives $X_0(G_\ell^{\circ})\supseteq X_0(G_{\ell-1}^{\circ})\setminus U_\ell$.
Both runs start from $X_0=D_b$, so induction on $\ell$ yields $X_0(G_m^{\mathrm{fnap}})\subseteq X_0(G_m^{\circ})$, and dividing by $|D_b|$ gives the ratio inequality.
Unfolding the two recursions gives $X_0(G_m^{\mathrm{fnap}})=D_b\setminus\bigcup_\ell U_\ell$ and $X_0(G_m^{\circ})\supseteq D_b\setminus\bigcup_\ell(\{p_\ell\}\cup N_\ell)$, which yields the strictness condition.
\qedhere
\end{proof}

We next show that TCE, paired with hot-node entry points, achieves a lower expected routing cost than exhaustive CE with centroid-based entries.
Let $G^{\mathrm{CE}}$ be the graph obtained by exhaustive CE, and let $G^{\mathrm{TCE}}$ be the graph obtained by TCE.
For a query $q$, let $R(q)\subseteq D_b$ be its \emph{local search region}, defined as the set of the $c\ge k$ base vectors nearest to $q$, which is graph-independent.
Let $\pi_0$ be the centroid-based entry distribution, and let $\pi_H$ denote the entry distribution based on the hot-node set $H_m$.
For an entry node $s$, we define the routing cost as
\begin{equation}
\label{equ:routing_cost}
L(q;G,s)=T_{\mathrm{ent}}(q;G,s)+T_{\mathrm{loc}}(q;G),
\end{equation}
where $T_{\mathrm{ent}}(q; G,s)$ is the number of visited nodes before the search first reaches $R(q)$, i.e., the entry cost, and $T_{\mathrm{loc}}(q; G)$ is the number of additionally visited nodes after entering $R(q)$, i.e., the local traversal cost. The expected routing cost is
\begin{equation}
\label{equ:avg_routing_cost}
\mathcal{L}(q;G,\pi)=\mathbb{E}_{s\sim \pi}[L(q;G,s)].
\end{equation}
Let $F_H(r)=\Pr_{s\sim\pi_H}[d_{G^{\mathrm{TCE}}}(s,R(q))\le r]$ denote the distribution of the entry-to-region distance under the hot-node entries, and define $F_0(r)$ analogously with $\pi_0$ on $G^{\mathrm{CE}}$.
\begin{lemma}
\label{lemma:3}
Assume $R(q)$ is reachable from $\mathrm{supp}(\pi_H)$ in $G^{\mathrm{TCE}}$ and from $\mathrm{supp}(\pi_0)$ in $G^{\mathrm{CE}}$, and the following routing conditions hold.
\textit{(R1) Entry-cost monotonicity}: for each $q$ there exists a strictly increasing $\Phi_q$, shared by both graphs, with $T_{\mathrm{ent}}(q;G,s)=\Phi_q(d_G(s,R(q)))$.
\textit{(R2) Entry proximity}: $F_H(r)\ge F_0(r)$ for every $r$, strictly for some $r$.
\textit{(R3) No local-cost penalty}: $T_{\mathrm{loc}}(q;G^{\mathrm{TCE}})\le T_{\mathrm{loc}}(q;G^{\mathrm{CE}})$.
Then $\mathcal{L}(q;G^{\mathrm{TCE}},\pi_H)<\mathcal{L}(q;G^{\mathrm{CE}},\pi_0)$.
\end{lemma}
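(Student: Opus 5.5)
The plan is to split the expected routing cost in Equation~\eqref{equ:avg_routing_cost} into an entry part and a local part, using Equation~\eqref{equ:routing_cost} and linearity of expectation:
\[
\mathcal{L}(q;G,\pi)=\mathbb{E}_{s\sim\pi}\bigl[T_{\mathrm{ent}}(q;G,s)\bigr]+T_{\mathrm{loc}}(q;G),
\]
which holds because $T_{\mathrm{loc}}$ does not depend on $s$. I will compare the two terms separately for the pairs $(G^{\mathrm{TCE}},\pi_H)$ and $(G^{\mathrm{CE}},\pi_0)$. By (R3) the local term for $G^{\mathrm{TCE}}$ is at most that for $G^{\mathrm{CE}}$. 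It therefore suffices to show a strict inequality between the entry terms.

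For the entry term I will use (R1) to write $\mathbb{E}_{s\sim\pi_H}[T_{\mathrm{ent}}]=\mathbb{E}[\Phi_q(D_H)]$. Here $D_H=d_{G^{\mathrm{TCE}}}(s,R(q))$ with $s\sim\pi_H$, and $D_0$ is defined analogously on $G^{\mathrm{CE}}$ with $s\sim\pi_0$. The reachability assumption makes both distances finite, and they take values in $\{0,1,\dots,|D_b|\}$. Condition (R2) says exactly that $D_H$ is first-order stochastically dominated by $D_0$, that is, $D_H$ is stochastically smaller. Since $\Phi_q$ is increasing, the standard tail-sum identity gives
\[
\mathbb{E}[\Phi_q(D)]=\Phi_q(0)+\sum_{r\ge 0}\bigl(\Phi_q(r+1)-\Phi_q(r)\bigr)\bigl(1-F(r)\bigr)
\]
for $D$ supported on the nonnegative integers with CDF $F$. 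This is a finite sum because the support is bounded.

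Comparing the two sums term by term, each increment $\Phi_q(r+1)-\Phi_q(r)$ is strictly positive by strict monotonicity. Each tail satisfies $1-F_H(r)\le 1-F_0(r)$ by (R2). Hence $\mathbb{E}[\Phi_q(D_H)]\le\mathbb{E}[\Phi_q(D_0)]$. For strictness, (R2) supplies some $r^\ast$ with $F_H(r^\ast)>F_0(r^\ast)$. Because both CDFs are integer step functions, I may take $r^\ast$ to be an integer in the support range, and the corresponding summand is then strictly smaller. Adding the weak local inequality from (R3) then yields $\mathcal{L}(q;G^{\mathrm{TCE}},\pi_H)<\mathcal{L}(q;G^{\mathrm{CE}},\pi_0)$.

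The main obstacle is the strictness step. I must check that the $r$ where (R2) is strict corresponds to a summand actually present in the tail-sum, rather than a point outside the support range where both CDFs equal $1$. This is automatic, since strict inequality forces $F_0(r^\ast)<1$. I must also treat non-integer $r$ by replacing it with $\lfloor r\rfloor$, on which both step CDFs agree with their values at $r$. A minor point is that $\Phi_q$ need only be defined on the integer distances, so the sum is purely combinatorial.
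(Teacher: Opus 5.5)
Your proposal is correct and matches the paper's proof. The paper uses the same split of $\mathcal{L}$ into an expected entry cost plus the $s$-independent local cost. It then applies the same Abel/tail-sum identity, giving the entry-cost difference $\sum_{r\ge 0}(\Phi_q(r+1)-\Phi_q(r))(F_H(r)-F_0(r))>0$ from (R1) and (R2), and adds (R3) for the local term.

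Your extra checks are valid and slightly tighten steps the paper states without comment: the distances are integer-valued and finite, and the strict point of (R2) must be a nonnegative integer where $F_0<1$.
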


\begin{proof}
Applying \eqref{equ:routing_cost} to \eqref{equ:avg_routing_cost} gives $\mathcal{L}(q;G,\pi)\allowbreak=\mathbb{E}_{s\sim\pi}[T_{\mathrm{ent}}(q;G,s)]+T_{\mathrm{loc}}(q;G)$.
% Substituting Equation~\eqref{equ:routing_cost} into Equation~\eqref{equ:avg_routing_cost} yields $\mathcal{L}(q;G,\pi)\allowbreak=\mathbb{E}_{s\sim\pi}[T_{\mathrm{ent}}(q;G,s)]+T_{\mathrm{loc}}(q;G)$.
% Abel summation gives
% $\mathbb{E}_{s\sim\pi_0}[T_{\mathrm{ent}}(q;G^{\mathrm{CE}},s)]-\mathbb{E}_{s\sim\pi_H}[T_{\mathrm{ent}}(q;G^{\mathrm{TCE}},s)]=\sum_{r\ge0}\big(\Phi_q(r{+}1)-\Phi_q(r)\big)\big(F_H(r)-F_0(r)\big)>0$,
% since every increment of $\Phi_q$ is positive by (R1) and $F_H\ge F_0$ with strict inequality at some $r$ by (R2). Adding (R3) yields the claim.
Abel summation gives
\begin{align*}
&\mathbb{E}_{s\sim\pi_0}
  [T_{\mathrm{ent}}(q;G^{\mathrm{CE}},s)] -
  \mathbb{E}_{s\sim\pi_H}
  [T_{\mathrm{ent}}(q;G^{\mathrm{TCE}},s)] \\
&=
\sum_{r\ge 0}
\bigl(\Phi_q(r+1)-\Phi_q(r)\bigr)
\bigl(F_H(r)-F_0(r)\bigr) > 0,
\end{align*}
since every increment of $\Phi_q$ is positive by (R1) and
$F_H\ge F_0$ with strict inequality at some $r$ by (R2).
For an entry set of i.i.d.\ draws, $d_G(s,R(q))$ is replaced by its minimum over the set, which preserves the ordering in (R2), so the conclusion still holds.
\end{proof}

We next examine the three conditions, which are modeling assumptions on beam search. (R2) is the only condition involving the query distribution, and its $r=0$ case follows from the hot-node statistics.
Since $Q_b$ and the online queries are i.i.d.\ from $\mathcal{P}_q$, the ratio $h_m(v)/m$ provides an unbiased estimate of $\Pr_{q\sim\mathcal{P}_q}[v\in\mathrm{kNV}(q,D_b)]$, and $c\ge k$ gives $\mathrm{kNV}(q,D_b)\subseteq R(q)$.
Each hot node is thus contained in $R(q)$ with probability above $t/m$, whereas a uniformly sampled entry does so with probability $c/|D_b|$, a probability ratio of at least $t|D_b|/(cm)$, consistent with Figure~\ref{fig:motivation-b}, where the top 1\% most accessed nodes receive about $15\times$ the uniform access share on T2I-10M. This supports the strict inequality at $r=0$.
Finally, (R3) is corroborated by Figure~\ref{fig:diff_tce}: raising $\rho_{CE}$ from 0.5 to 1 lowers throughput by 10\%, indicating that exhaustive refinement does not reduce the local traversal cost.

Theorem~\ref{theo} follows by applying Corollary~\ref{coro:1}, Corollary~\ref{coro:2}, and Lemma~\ref{lemma:3} to its three guarantees, respectively.
When ingesting the same number of auxiliary query vectors for construction, CAAP improves the balance of query coverage, FNAP extracts more structural gain from each query's kNV, and TCE focuses refinement on nodes that dominate the search efficiency. 
Together, these designs improve graph connectivity and query-aligned navigability, thereby enhancing both index quality and search efficiency in the OOD scenario.
The empirical results of construction on T2I-10M reflect these theoretical results. As shown in Figure~\ref{fig:motivation-c}, CORAL drives the isolated-node ratio $\rho(G_m)$ below 10\% after 1M queries while RoarGraph retains over 30\%, reflecting the improved coverage balance and connectivity established by Corollary~\ref{coro:1} and Corollary~\ref{coro:2}. The routing-cost reduction of Lemma~\ref{lemma:3} is reflected in the avg-hops results in Section~\ref{sec:effectiveness}.

\begin{figure*}[t]
    \centering
    \includegraphics[width=0.95\textwidth]{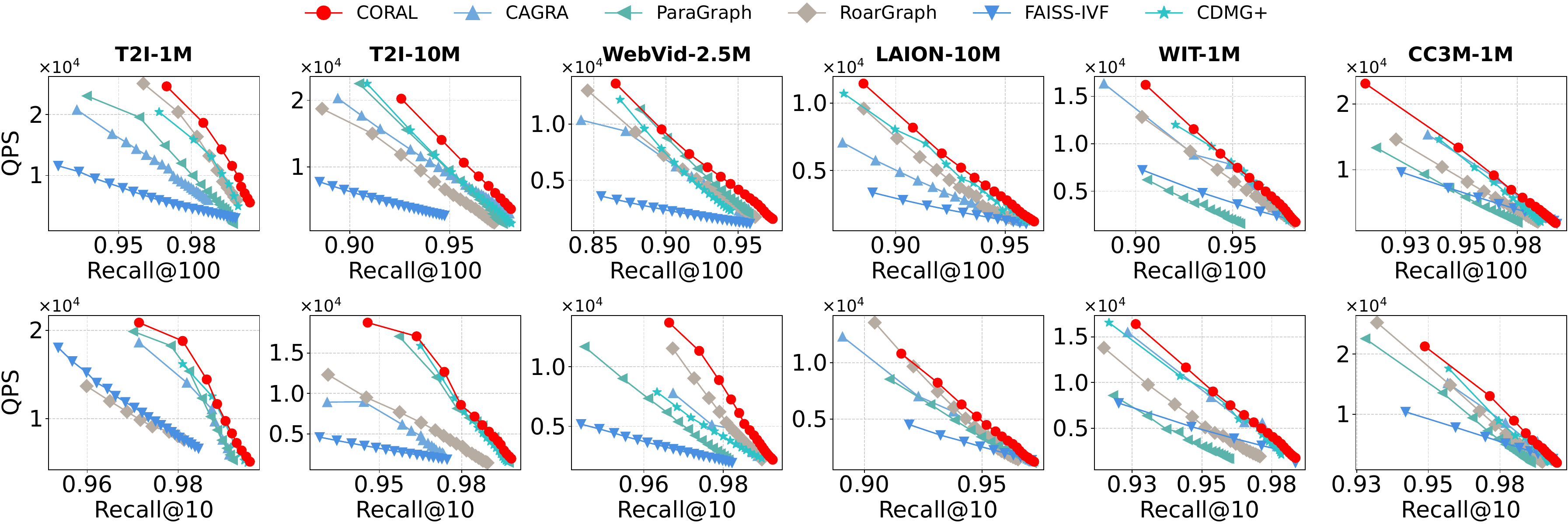}    
    \caption{Efficiency (QPS) and effectiveness (Recall@$k$) of CORAL and five other baseline algorithms across five OOD datasets: the top-right indicates better performance.}
    \label{fig:perf1}
\end{figure*}

\section{EVALUATION}
\label{sec:experiments}

We evaluate CORAL with four evaluation questions:
% \begin{description}[leftmargin=*]
% \begin{description}
\begin{description}[
  leftmargin=2.5em,
  labelwidth=2em,
  labelsep=.5em,
  align=left
]
\item[\textbf{EQ1}] \uline{Effectiveness of Cross-Modal Vector Retrieval (Section~\ref{sec:effectiveness})}:
Is CORAL more effective than existing baselines in search efficiency, scalability, and update robustness?
\item[\textbf{EQ2}] \uline{Construction Costs (Section~\ref{sec:cost_analysis})}:
Are CORAL’s construction costs competitive and practical for real-world deployment?
\item[\textbf{EQ3}] \uline{Parameter Sensitivity (Section~\ref{sec:parameter_sensitivity})}:
Are parameters $|Q_b|$, $k_b$, and $\rho_{CE}$ robust across a reasonable range and generalizable?
\item[\textbf{EQ4}] \uline{Ablation Study (Section~\ref{sec:ablation})}:
What is the contribution of each strategy and mechanism in CORAL?

\end{description}

\subsection{Experimental Setup}

\noindent \textbf{Hardware and software environment.} Unless otherwise specified, our experiments are conducted on a server running Ubuntu 24.04.1 LTS, equipped with one Intel Xeon Gold 5218 CPU, 256 GB of main memory, and one NVIDIA A100 PCIe GPU with 40 GB of HBM2 device memory. 
All GPU kernels are implemented and compiled using CUDA Toolkit 12.8.
For large-scale CPU-only experiments on the T2I-100M and T2I-1B datasets, we use an Intel Xeon Platinum 8558 CPU for index construction and search.

\begin{figure*}[t]
    \centering
    \includegraphics[width=0.95\textwidth]{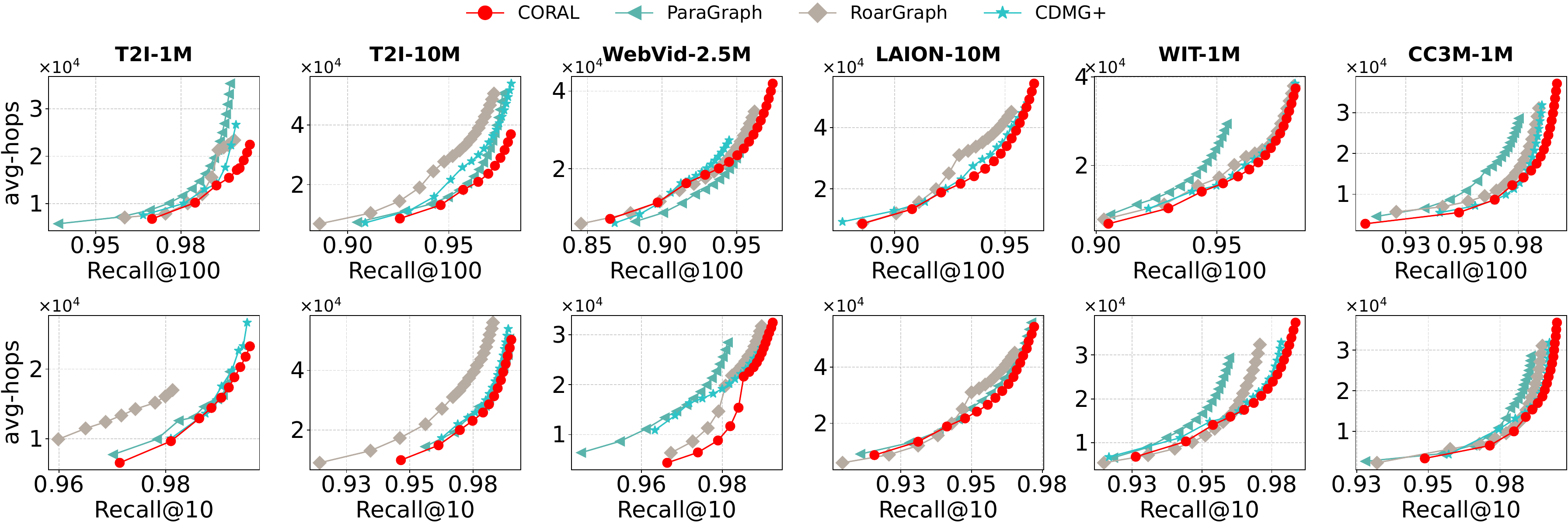}    
    \caption{Efficiency and effectiveness of CORAL and the other three baseline algorithms across five OOD datasets.}
    \label{fig:perf2}
\end{figure*}

\noindent \textbf{Datasets.} 
We use five datasets: Text-to-Image (T2I)~\cite{simhadri2022bigann}, WebVid~\cite{bain2021frozen}, LAION~\cite{schuhmann2021laion}, WIT~\cite{srinivasan2021wit}, and CC3M~\cite{sharma2018conceptual}.
T2I uses 200-dimensional vectors with $\ell_2$ distance, while WebVid, LAION, WIT, and CC3M use 512-dimensional CLIP embeddings with inner product, covering text-video and text-image retrieval scenarios.
For T2I, we construct four subsets with 1M, 10M, 100M, and 1B base vectors to evaluate scalability across different storage hierarchies.
Following CORAL's hierarchical GPU-memory/main-memory/disk index (Section~\ref{sec:construction}), the concrete layout follows the dataset scale: the 1M and 10M subsets fit entirely in GPU memory, the 100M subset fits in main memory, and the 1B subset exceeds main memory and is evaluated out-of-core with the complete graph on disk.
The corresponding auxiliary query vector scales used for construction are 0.1M, 1M, 10M, and 30M, respectively.
For the other datasets, we use 2.5M base vectors and 1M auxiliary queries for WebVid, 10M base vectors and 1M auxiliary queries for LAION, and 1M base vectors with 0.1M auxiliary queries for both WIT and CC3M.
For evaluation, we use the official 10K query vectors for T2I and randomly sample 10K queries from the original data for the remaining datasets. We ensure that they are disjoint from the auxiliary queries used for index construction.

\noindent \textbf{Baselines.} 
% We compare our approach against four OOD ANNS methods, namely RoarGraph, ParaGraph, CDMG+, and OOD-DiskANN, as well as three ID ANNS methods, namely FAISS-IVF, CAGRA, and DiskANN. 
We compare CORAL with four OOD ANNS baselines: RoarGraph, ParaGraph, CDMG+, and OOD-DiskANN.
We also include three ID ANNS baselines:
FAISS-IVF, CAGRA, and DiskANN.
The methods differ in execution profiles: CORAL, FAISS-IVF, and CAGRA construct and search on the GPU; ParaGraph constructs on the GPU but searches on the CPU; RoarGraph and CDMG+ run entirely on the CPU; and OOD-DiskANN and DiskANN are CPU-only, residing in main memory and on disk, respectively.
For fairness, for datasets with at most 10M vectors, we load the OOD baselines lacking native GPU search (RoarGraph, ParaGraph, CDMG+) into CORAL's GPU search framework, so all methods share the same search kernel and hardware, and differences reflect only the graph structure.
At the 100M and 1B scales, CORAL searches on the CPU to match OOD-DiskANN and DiskANN. At these scales, search latency is dominated by cross-tier data transfer rather than distance computation, making CPU search both fair and practical.
Below, we describe the configuration of each baseline.
% \begin{itemize}
\begin{itemize}[leftmargin=1.5em]
\item \textbf{RoarGraph}~\cite{chen2024roargraph}: a cross-modal graph-based index with neighborhood size $N = 100$ and maximum out-degree $M = 32$.
\item \textbf{ParaGraph}~\cite{yang2025paragraph}: a cross-modal graph-based index with a hybrid CPU--GPU architecture. We adopt the same experimental configuration as the original paper.
% : GPU top-1 projection graph ($degree = 15$), GPU-based search-and-refine graph ($degree = 55$), and CPU multi-round top-m projection graph ($degree = 40$). 
The query size used for index construction matches the base vector size, as required by its code.
\item  \textbf{CDMG+}~\cite{yue2026efficient}: the state-of-the-art cross-modal graph-based index, using neighborhood size $N = 100$ and maximum out-degree $M = 32$.
\item  \textbf{FAISS-IVF}~\cite{johnson2021billion}: a GPU-accelerated partition-based index for ID ANNS.
\item \textbf{CAGRA}~\cite{ootomo2024cagra}: a GPU-accelerated graph-based index for single-modal ANNS with fixed out-degree $M = 32$.
\item \textbf{OOD-DiskANN}~\cite{jaiswal2022ood}: a DiskANN variant designed for OOD queries, with $\alpha = 1.2$, max out-degree $M = 32$, and search list size $L = 64$. We use it as a baseline for the 100M-scale dataset.
\item \textbf{DiskANN}~\cite{Subramanya2019DiskANN}: a disk-based ANNS method, enabling scalable and memory-efficient search over billion-scale vector datasets, with $\alpha = 1.2$, max out-degree $M = 32$, and search list size $L = 64$ during index building. We use it as a baseline for the 1B-scale dataset because OOD-DiskANN does not provide a runnable on-disk implementation of the 1B dataset.
\end{itemize}

\begin{figure}[t]
    \centering
    \includegraphics[width=\linewidth]{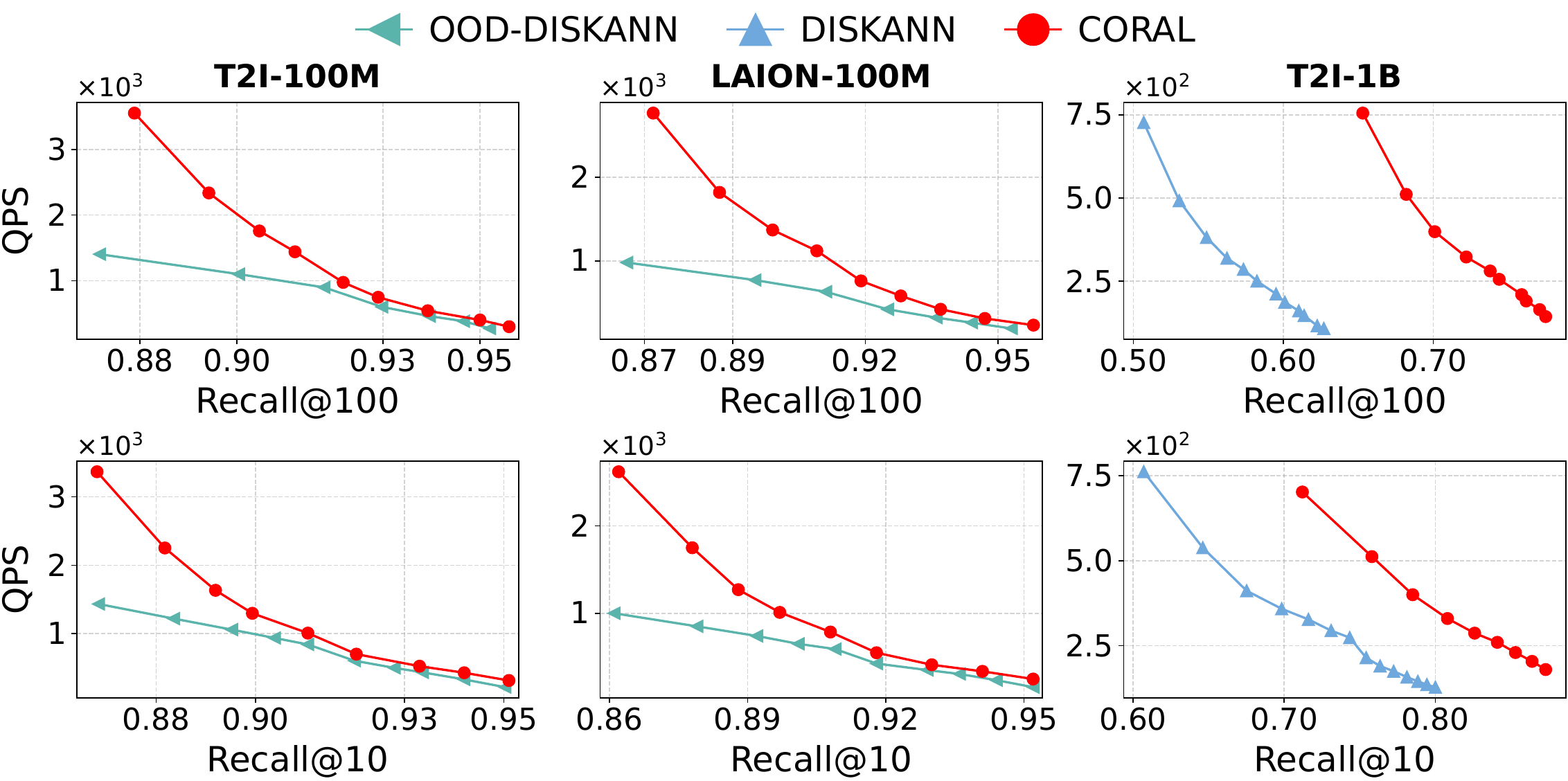}
    \caption{Effect of the GPU--CPU--disk framework of CORAL and two baseline methods across three large-scale datasets.}
    \label{fig:large}
\end{figure}

\noindent \textbf{Implementation Details.}
For CORAL's index construction, unless otherwise specified, we set $k_b$ to 100 for kNV acquisition in CAAP, the maximum out-degree to $M=32$ in FNAP, and the hot-node ratio to $\rho_{CE}=50\%$ in TCE.
These default settings balance search efficiency and construction cost, and their sensitivity is further analyzed in Section~\ref{sec:parameter_sensitivity}.

\noindent \textbf{Evaluation Metrics.}
We use Recall@$k$, the proportion of true top-$k$ nearest neighbors among the top-$k$ results, averaged over all queries, to measure retrieval accuracy in our evaluations. 
We use queries per second (QPS) to measure throughput and average hops (avg-hops) to measure computational cost during search. 
To thoroughly validate the efficacy of each method across diverse retrieval scenarios, we set $k$ to 10 and 100 for Recall@$k$. Unless otherwise stated, throughput is reported at Recall@100 = 95\%.

\subsection{Effectiveness of Vector Retrieval (EQ1)}
\label{sec:effectiveness}
We evaluate CORAL's cross-modal retrieval effectiveness in terms of search efficiency, scalability, and update robustness.

\noindent \textbf{Search Efficiency.}
Figures~\ref{fig:perf1} and \ref{fig:perf2} compare CORAL with all baselines except OOD-DiskANN and DiskANN on datasets up to 10M vectors, reporting Recall@10, Recall@100, QPS, and avg-hops. 
ParaGraph is omitted on LAION-10M because its construction exceeds GPU memory. 
Because the search procedures of FAISS-IVF and CAGRA are internally optimized, avg-hops is unavailable for these methods.
At matched recall, CORAL achieves average throughput improvements of $3.03\times$ over FAISS-IVF, $1.74\times$ over CAGRA, $1.70\times$ over RoarGraph, $2.27\times$ over ParaGraph, and $1.39\times$ over CDMG+. 
ParaGraph performs well on T2I and WebVid but poorly on WIT and CC3M, for which official configurations are unavailable and LAION-based settings are used. 

Among the graph-based OOD baselines, CORAL achieves the lowest avg-hops in most cases. 
At matched recall, it requires approximately 25\% to 36\% fewer hops than RoarGraph and ParaGraph, and approximately 20\% fewer than CDMG+, the strongest baseline in this metric.
These results are consistent with Lemma~\ref{lemma:3}.

\noindent \textbf{Index Scalability.}
Figure~\ref{fig:large} evaluates CORAL's hierarchical GPU--CPU--disk framework.
On T2I-100M, CORAL and OOD-DiskANN both store the graph in main memory.
CORAL achieves $1.39\times$ the average throughput of OOD-DiskANN at Recall@10 = 95\% and builds in 23897\,s, compared with 28809\,s for OOD-DiskANN.
The same in-memory comparison on LAION-100M shows consistently higher throughput, indicating that the advantage generalizes beyond T2I.
% On LAION-100M, evaluated against OOD-DiskANN under the same in-memory setting, CORAL exhibits the same trend, sustaining consistently higher throughput, which confirms that the scalability advantage generalizes across distinct cross-modal distributions rather than being specific to T2I.
For the T2I-1B dataset, the graph resides on disk, incurring more data transfer overhead than the 100M-scale dataset. 
At this scale, higher recall requires more disk reads per query, so throughput drops sharply as recall increases. This is an intrinsic cost of on-disk retrieval rather than a reflection of graph quality, and OOD-DiskANN itself reports below 60\% recall on T2I-1B~\cite{jaiswal2022ood}.
In this case, CORAL takes 15.5 hours to construct the index, compared with 32.3 hours for DiskANN, and achieves $2.5\times$ DiskANN's throughput at Recall@10 = 80\%, highlighting CORAL's scalability. % as an index for OOD ANNS.

\begin{figure}[t]
    \centering
    \includegraphics[width=0.95\linewidth]{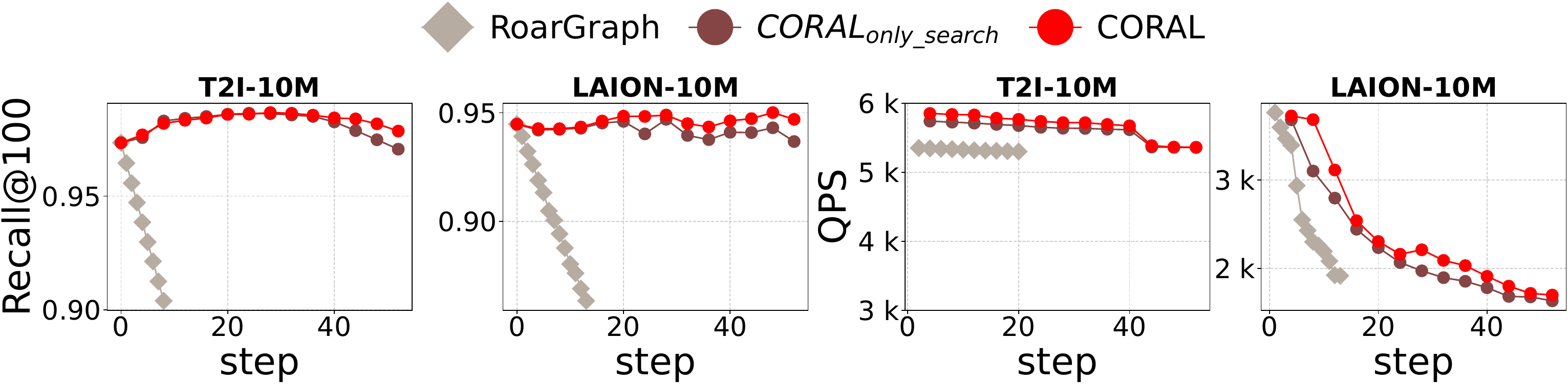}
    \caption{
    Index Recall@100 and QPS under incremental updates on T2I-10M and LAION-10M, where each step deletes and inserts 1\% of the data.
    }
    \label{fig:upd}
\end{figure}

\noindent \textbf{Update Robustness.}
Figure~\ref{fig:upd} evaluates index performance over 50 update steps. 
All methods start from the same initial graph constructed by CORAL on the first 5M vectors of each dataset, with the remaining 5M serving as the insertion pool. 
Each step deletes 1\% of the existing vectors and inserts the same number of new vectors, and all methods use the same search parameters.

RoarGraph's recall drops below 90\% within 10 steps on T2I-10M and below 87.5\% within about 10 steps on LAION-10M, while its search efficiency declines the fastest.
$\text{CORAL}_{\text{only\_search}}$ connects inserted nodes to neighbors retrieved through graph search and maintains Recall@100 above 97\% on T2I-10M after all 50 steps, although its performance begins to degrade after 40 updates. 
Full CORAL maintains Recall@100 around 98\% with largely stable throughput on T2I-10M and even improves during the early steps.
On LAION-10M, it similarly maintains approximately 95\% recall and the highest throughput throughout all 50 steps.

Across the 50 steps, CORAL sustains insertion and deletion rates of 2288 and 5367 vectors/s, respectively, compared with 1428 and 232463 vectors/s for RoarGraph. 

\noindent \textbf{Summary.} 
Overall, CORAL achieves $1.4\times$ the best baseline throughput on T2I-10M and $1.6\times$ at Recall@100 = 98\% on WebVid-2.5M, scales to 100M and 1B vectors with lower construction time, and maintains stable recall and throughput over 50 update steps. 
Its search gains stem from more effective use of query modal information and targeted structural refinement, while ID methods ignore query modal information and OOD baselines inherit limitations from RoarGraph's construction strategy. 
Under updates, RoarGraph's apparent deletion advantage comes from lazy tombstones, which accumulate invalid routing paths, while its insertion mechanism integrates poorly with the evolving graph. 
CORAL instead combines query-semantic connections with refinement from global search and repairs the graph after deletions, preventing long-term degradation.

\begin{figure}[t]
    \centering
    \includegraphics[width=\linewidth]{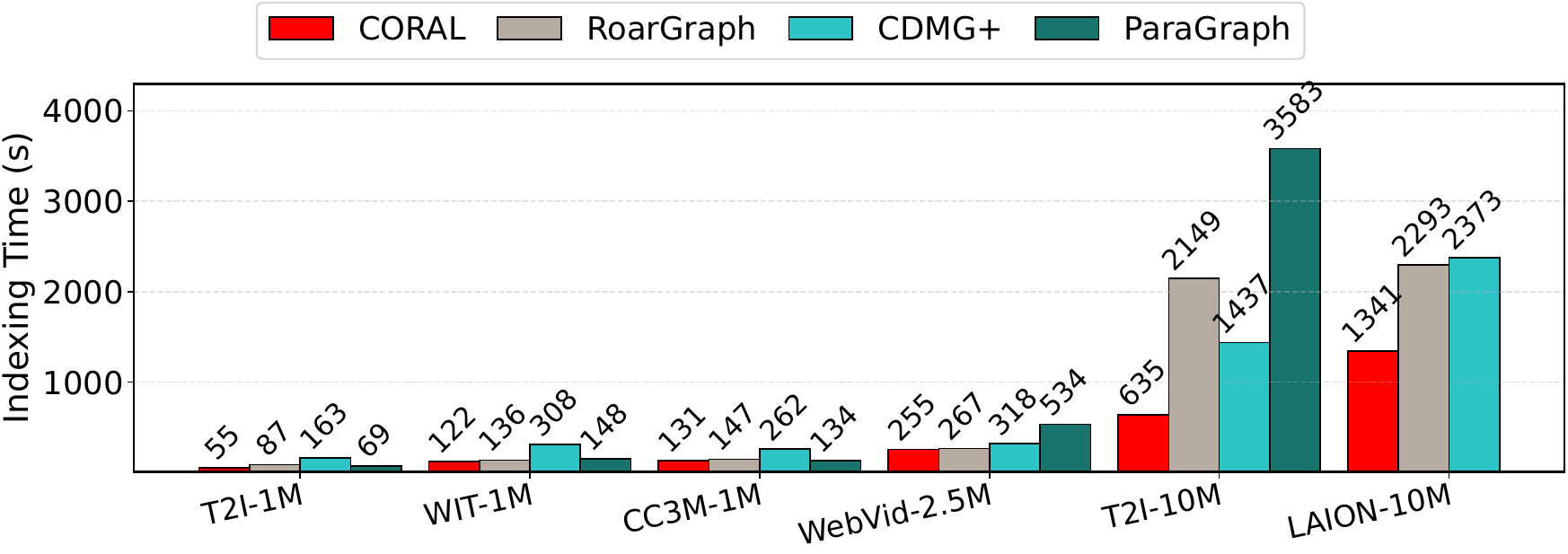}
    
    \caption{Index construction time of CORAL and three OOD ANNS baseline methods on different datasets.}
    \label{fig:perf-time}
\end{figure}

\subsection{Cost Analysis (EQ2)}
\label{sec:cost_analysis}
We analyze the storage and construction costs of CORAL.

\noindent \textbf{Storage Cost.} CORAL uses a bounded-degree graph, so its storage cost scales linearly with the dataset size and maximum out-degree, while the small set of entry point IDs adds negligible overhead.
Other bounded-degree graph baselines have similar storage costs. In contrast, FAISS-IVF stores only cluster centroids, inverted-list assignments, and optional auxiliary data, resulting in a smaller footprint.
On T2I-10M ($N=10\,\mathrm{M}$, $d=200$, $M=32$), the base vectors and adjacency lists occupy about 8.6 GB for all bounded-degree graph baselines under the same $M$.

\noindent \textbf{Construction Time.} Figure~\ref{fig:perf-time} compares index construction time across datasets. CORAL takes 635\,s and 1341\,s on T2I-10M and LAION-10M, respectively, reducing construction time by up to 56\% relative to the fastest OOD ANNS baselines.
Because CORAL and ParaGraph both construct indexes on GPUs, their comparison controls for hardware differences: on T2I-10M, CORAL takes 635\,s versus ParaGraph's 3583\,s, achieving a $5.6\times$ speedup.

\noindent \textbf{Summary.}
Overall, CORAL has storage costs comparable to other graph baselines while substantially reducing construction time.
The comparison with GPU-based ParaGraph shows that this gain stems from CORAL's algorithmic design, including GPU-parallel graph updates, efficient use of query information, and TCE.

\begin{figure}[t]
    \centering
    \subfigure[Index performance.]{\label{fig:diff}
        \includegraphics[width=0.475\linewidth]{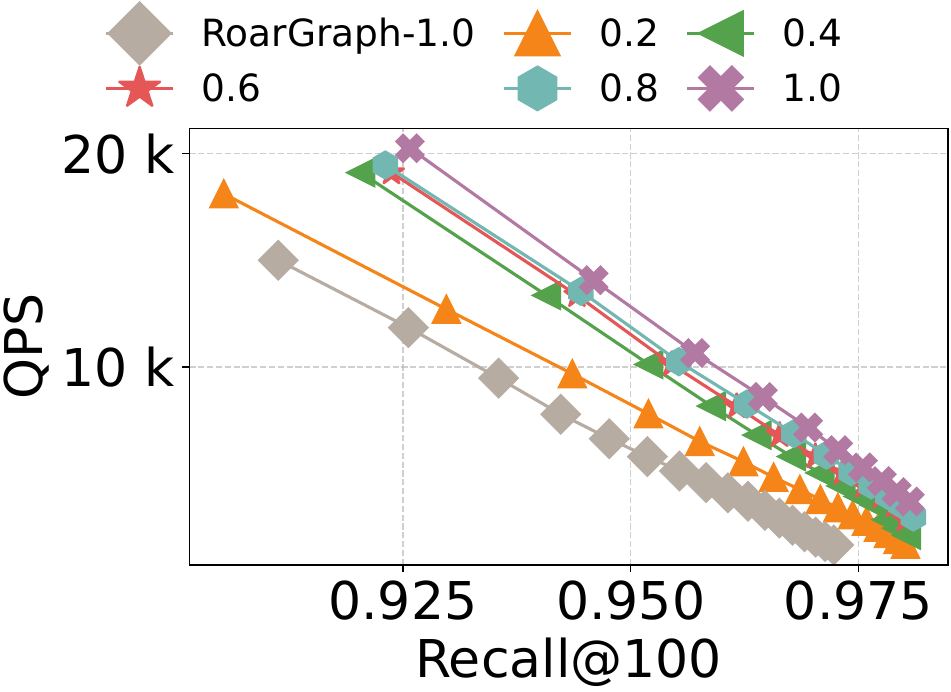}
    }
    \subfigure[Construction time.]{\label{fig:cons_time}
        \includegraphics[width=0.44\linewidth]{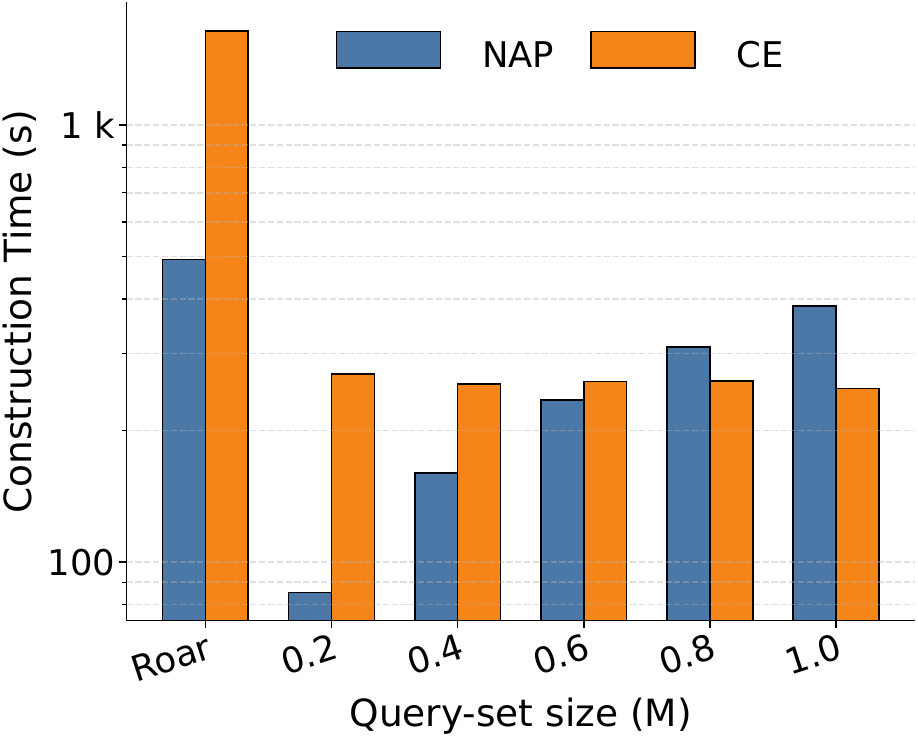}
    }
    \subfigure[Effect of $k_b$ on kNV acquisition.]{\label{fig:diff_k}
        \includegraphics[width=0.475\linewidth]{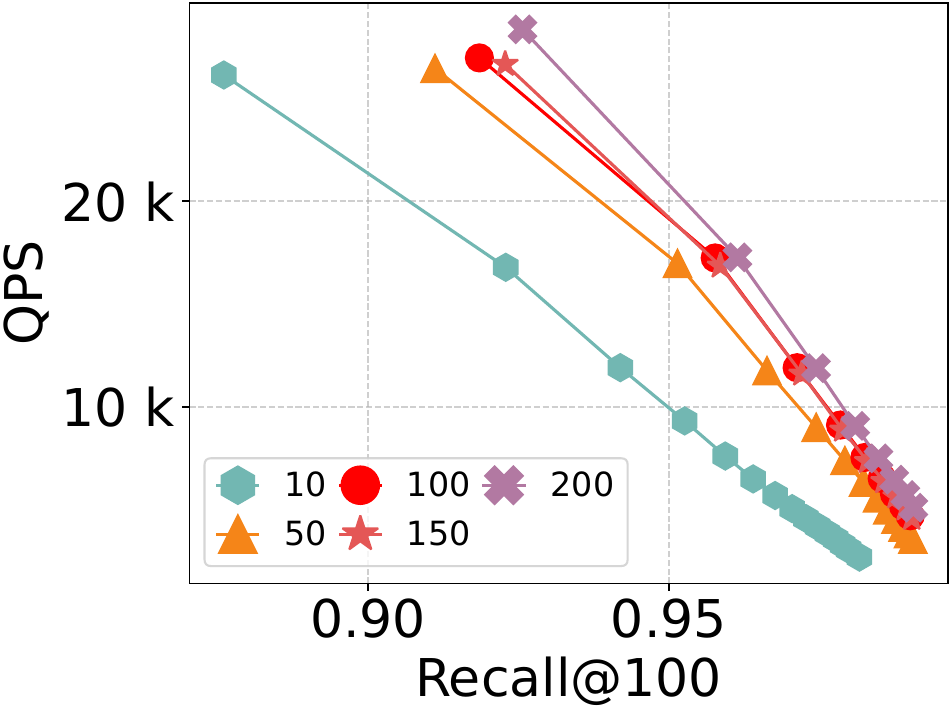}
    }
    \subfigure[Effect of $\rho_{CE}$ in TCE.]{\label{fig:diff_tce}
        \includegraphics[width=0.475\linewidth]{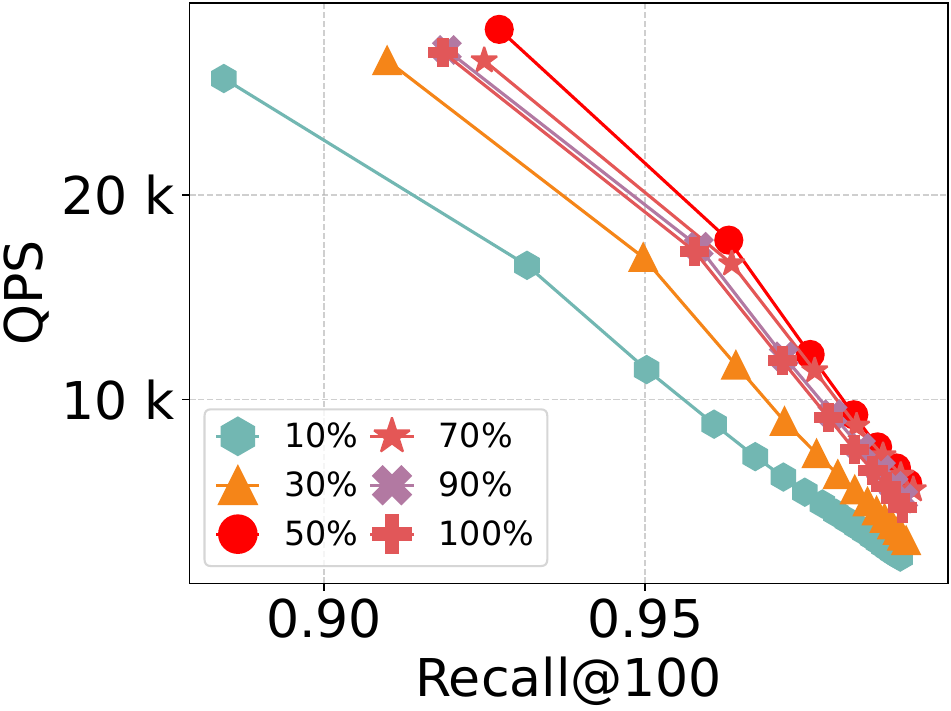}
    }
    \caption{Sensitivity analysis on $|Q_b|$, $k_b$, and $\rho_{CE}$.}
    \label{fig:diff-q}
\end{figure}

\subsection{Parameter Sensitivity Analysis (EQ3)}
\label{sec:parameter_sensitivity}

We evaluate CORAL's sensitivity to three key parameters: the number of ingested query vectors $|Q_b|$, $k_b$ for kNV acquisition, and the hot-node ratio $\rho_{CE}$ in TCE.
We also vary the convergence-test threshold, defined as the minimum per-batch improvement in test-sample recall, from 0.1\% to 0.5\%, and the hot-node pruning threshold $t$ from 10 to 20. Throughput changes by less than 3.1\% and 6.3\%, respectively, so the following analysis focuses on the three key parameters.

Figure~\ref{fig:diff} evaluates $|Q_b|$ on T2I-10M, using RoarGraph constructed with 1.0M queries as the baseline.
Increasing $|Q_b|$ from 0.2M to 0.6M improves throughput by 30\%, whereas increasing it from 0.6M to 1.0M brings only an additional 18\% improvement.
CORAL already outperforms RoarGraph with only 0.2M queries.
Figure~\ref{fig:cons_time} reports the corresponding construction time, decomposed into the NAP graph update stage and the CE stage.
As $|Q_b|$ increases, the NAP stage's share of construction cost rises from 24\% to 61\%, while the CE cost remains stable.
RoarGraph's exhaustive CE incurs $3.5\times$ the construction cost of CORAL.

Figure~\ref{fig:diff_k} shows that increasing $k_b$ from 10 to 100 improves throughput by 92\%, while increasing it from 100 to 200 yields only a further 9\% improvement.
Figure~\ref{fig:diff_tce} shows that increasing $\rho_{CE}$ from 10\% to 50\% improves throughput by 87\%, whereas increasing it from 50\% to 100\% reduces throughput by 10\%.
% This is because excessive CE makes the graph less focused on query-relevant regions.

\noindent \textbf{Summary.}
Overall, CORAL is robust across parameter settings.
Its advantage over RoarGraph demonstrates more effective use of query information, while the $|Q_b|$ and $k_b$ results show diminishing returns.
The stable CE cost reflects CORAL's refinement of a controlled subset of structurally critical nodes, keeping refinement cost controllable.
The decline at $\rho_{CE}=100\%$ shows that excessive CE reduces the graph's focus on query-relevant regions.
Together, these trade-offs support $k_b=100$ and $\rho_{CE}=50\%$ as the default settings.
% Larger $|Q_b|$ and $k_b$ improve search efficiency but show diminishing returns, while targeted CE keeps refinement cost controllable.
% The results justify the default choices of $k_b=100$ and $\rho_{CE}=50\%$.

\subsection{Ablation Study (EQ4)}
\label{sec:ablation}

We evaluate the contributions of CAAP, FNAP, and TCE in Table~\ref{tab:alba}.
Removing FNAP and reverting to the conventional single-pivot NAP causes the largest drop, reducing QPS from 10647.30 to 3397.55 on T2I-10M and from 3099.81 to 1703.87 on LAION-10M, which isolates the contribution of the GPU-oriented graph-update design.
Relative to the variant without both CAAP and TCE, adding CAAP alone yields $1.46\times$ and $1.16\times$ speedups on T2I-10M and LAION-10M, respectively, while adding TCE alone yields $2.63\times$ and $1.69\times$ speedups. Full CORAL achieves $3.28\times$ and $1.82\times$ speedups over the same variant.
% CAAP yields $1.46\times$ and $1.16\times$ speedups over the variant without CAAP and TCE on T2I-10M and LAION-10M, respectively.  
% This indicates that improving the utilization efficiency of kNV for queries directly benefits index quality. 
Additional experiments show that as queries are incrementally ingested, CORAL with CAAP efficiently reduces the isolated-node ratio, leaving fewer than 10\% of nodes isolated when the query scale reaches 1M.
% TCE yields $2.63\times$ and $1.69\times$ speedups on the two datasets, respectively. 
% The full system reaches $3.28\times$ and $1.82\times$ speedups, respectively.

\noindent \textbf{Summary.}
Overall, each component improves CORAL's search efficiency.
CAAP improves coverage efficiency by making better use of each query's kNV, FNAP contributes the largest gain through GPU-oriented graph updates, and TCE provides further gains through targeted refinement.
Together, these complementary mechanisms enable the full system to achieve the best performance on both datasets.
% {
% \footnotesize
% \begin{table}[t]
%     \centering
%     \caption{Ablation study.}
    
%     \renewcommand{\arraystretch}{0.8}
%     \setlength{\tabcolsep}{4pt}
    
%     \begin{tabular}{p{2.25cm}cccc}
%     \toprule
%     \multirow{2}{*}{\textbf{Method}} 
%     & \multicolumn{2}{c}{\textbf{T2I-10M}} 
%     & \multicolumn{2}{c}{\textbf{LAION-10M}} \\
%     \cmidrule(lr){2-3} \cmidrule(lr){4-5}
%     & QPS & p95 latency (ms)
%     & QPS & p95 latency (ms)  \\
%     \midrule
    
%     {\small w/o CAAP and TCE} & 3245.73 & 70.14 & 1699.11 & 102.58 \\
%     {\small w/o FNAP} & 3397.55 & 65.33 & 1703.87 & 91.47 \\
%     {\small w/o TCE} & 4746.52 & 37.39 & 1978.34 & 71.27 \\
%     {\small w/o CAAP} & 8536.63 & 28.47 & 2873.68 & 68.68 \\
    
%     \midrule
    
%     {\small \textbf{CORAL}} 
%     & \textbf{10647.3} & \textbf{21.22} 
%     & \textbf{3099.81} & \textbf{54.26} \\
    
%     \bottomrule
%     \end{tabular}
%     \label{tab:alba}
% \end{table}
% }

{
\footnotesize
\begin{table}[t]
    \centering
    \caption{Ablation study.}
    
    \renewcommand{\arraystretch}{0.8}
    \setlength{\tabcolsep}{3.5pt}
    
    \begin{tabular}{p{2.25cm}cccc}
    \toprule
    \multirow{2}{*}{\textbf{Method}} 
    & \multicolumn{2}{c}{\textbf{T2I-10M}} 
    & \multicolumn{2}{c}{\textbf{LAION-10M}} \\
    \cmidrule(lr){2-3} \cmidrule(lr){4-5}
    & QPS & p95 latency (ms)
    & QPS & p95 latency (ms)  \\
    \midrule
    
    {\small w/o CAAP and TCE} & 3245.73 & 70.14 & 1699.11 & 102.58 \\
    {\small w/o FNAP} & 3397.55 & 65.33 & 1703.87 & 91.47 \\
    {\small w/o TCE} & 4746.52 & 37.39 & 1978.34 & 71.27 \\
    {\small w/o CAAP} & 8536.63 & 28.47 & 2873.68 & 68.68 \\
    
    \midrule
    
    {\small \textbf{CORAL}} 
    & \textbf{10647.30} & \textbf{21.22} 
    & \textbf{3099.81} & \textbf{54.26} \\
    
    \bottomrule
    \end{tabular}
    \label{tab:alba}
\end{table}
}
\section{RELATED WORK}
\label{sec:related_work}

\noindent \textbf{OOD ANNS in Cross-Modal Scenarios.}
In cross-modal retrieval, distribution mismatch between query and base vectors degrades ID graph indexes such as HNSW~\cite{malkov2020efficient}, NSG~\cite{fu2019fast}, and DiskANN~\cite{Subramanya2019DiskANN}.
RoarGraph~\cite{chen2024roargraph} guides construction with query modal information, and ParaGraph~\cite{yang2025paragraph} and CDMG~\cite{yue2026efficient} further improve construction efficiency, yet all require costly kNV computation and redundant connectivity enhancement.
Orthogonal to these graph-based indexes, LeanVec~\cite{tepper2025leanvec_paper} addresses OOD search at the representation layer through query-aware projection with quantization.
CORAL instead uses incremental, coverage-aware graph construction to effectively exploit query information at lower construction cost.

\noindent \textbf{Dynamic and Streaming OOD ANNS.}
FreshDiskANN~\cite{singh2021freshdiskann} and SPFresh~\cite{xu2023spfresh} support incremental insertions through buffering and rebalancing, while HNSW~\cite{malkov2020efficient} and DiskANN~\cite{Subramanya2019DiskANN} provide limited update support, and all of these methods target ID settings.
In OOD settings, RoarGraph~\cite{chen2024roargraph} uses localized insertion, limiting global integration, while its tombstone deletion accumulates redundancy and degrades search efficiency.
CORAL instead combines local insertion with global refinement and physically removes deleted nodes through topological repair, preserving connectivity.

\noindent \textbf{GPU-Accelerated OOD ANNS.}
GPU systems such as Faiss~\cite{johnson2021billion} and CAGRA~\cite{ootomo2024cagra} accelerate graph search through parallelism.
ParaGraph~\cite{yang2025paragraph} adapts RoarGraph to GPUs for OOD search, but its pivot-centric construction limits parallelism and caps scalability at 10M vectors.
CORAL instead combines GPU-friendly full-neighborhood projection, targeted refinement, and hierarchical storage for large-scale construction.
\section{CONCLUSION}
\label{sec:conclusion}
We presented CORAL, a GPU-accelerated graph index for cross-modal vector retrieval that combines incremental query-guided construction, CAAP, FNAP, TCE, and topology-aware dynamic updates.
Experiments showed that CORAL improves retrieval effectiveness, reduces construction time by up to 56\% over existing baselines, remains robust under updates, and scales to billions of vectors.
Future work will address evolving data distributions and optimize heterogeneous CPU--GPU--disk execution.

\begin{acks}
% TODO: Replace this placeholder before CRC submission.
This work was supported by the Fundamental and Interdisciplinary Disciplines Breakthrough Plan of the Ministry of Education of China (JYB2025XDXM103) and the Leading Talent of Technological Innovation Program (No. 2023R5214) of Zhejiang Province.
\end{acks}

% \clearpage
\balance

\bibliographystyle{ACM-Reference-Format}
\bibliography{references}

\end{document}